\newtheorem{theorem}{Theorem}
\theoremstyle{plain}
\newtheorem{corollary}{Corollary}
\newtheorem{definition}{Definition}
\newtheorem{lemma}{Lemma}
\newtheorem{proposition}{Proposition}
\numberwithin{equation}{section}
\newcommand{\Z}{\mathbb{Z}}
\begin{document}
\title[]{On Codes Over $\mathbb{Z}_{p^{s}}$ With Extended Lee Weight}
\author{Zeynep Ödemİş Özger}
\address{Zeynep Ödemİş Özger, Department of Mathematics, Fatih University, 34500, İstanbul, Turkey } \email{\tt zodemis@fatih.edu.tr}
\author{Bahattİn Yildiz}
\address{Bahattİn Yildiz, Department of Mathematics, Fatih University, 34500, İstanbul, Turkey } \email{\tt byildiz@fatih.edu.tr}
\author{Steven T. Dougherty}
\address{Steven T. Dougherty, Department of Mathematics,  University of Scranton, Scranton, PA 18510, USA } \email{\tt prof.steven.dougherty@gmail.com}

\subjclass[2010]{Primary 94B05}
\keywords{extended Lee weight, Gray map, kernel, Singleton bound, MLDS codes, MLDR codes}

\begin{abstract}
We consider codes over $\Z_{p^s}$ with the extended Lee weight. We find Singleton bounds with respect to this weight and define  MLDS and MLDR codes accordingly. We also consider the kernels of these codes and the notion of independence of vectors in this space. We investigate the linearity and duality of the Gray images of codes over $\Z_{p^s}$.
\end{abstract}

\maketitle

\section{Introduction}

In the early history of coding theory, codes over finite fields were
predominantly studied. The most common weight used for such codes was the
Hamming weight, which is defined to be  the number of nonzero coordinates.  We will denote by the hamming weight by  $w_{H}$. Many encoding and decoding schemes
as well as error correction algorithms are based on the Hamming distance.

Codes over rings have been considered since the early seventies, however it was not until the beginning of the
nineties that they became a widely popular research field in coding theory. In 1994, Hammons
et al.(\cite{hammons}) solved a long standing problem in non-linear binary
codes by constructing the Kerdock and Preparata codes as the Gray images of
linear codes over $\mathbb{Z}_{4}$. This work started an intense activity on
codes over rings. The rich algebraic structure that rings bring together
with some better than optimal nonlinear codes obtained from linear codes
over rings have increased the popularity of this topic. What started with
the ring $\mathbb{Z}_{4}$, later was extended to rings such as $\mathbb{Z}%
_{2^{k}}$, $\mathbb{Z}_{p^{k}}$, Galois rings, $\mathbb{F}_{q}+u\mathbb{F}%
_{q}$, and various other rings.

For codes over rings, weights other than the Hamming weight were considered. For example, in \cite{hammons}, the authors used the Lee weight
on $%
\mathbb{Z}
_{4}$, which we will denote by $w_{L}$ and was defined as
\begin{equation*}
w_{L}(x):=\left\{
\begin{array}{ll}
0 & \text{if }x=0\text{,} \\
2 & \text{if }x=2\text{,} \\
1 & \text{otherwise.}%
\end{array}%
\right.
\end{equation*}

The Gray map
\begin{equation*}
\phi _{L}:%
\mathbb{Z}
_{4}\rightarrow
\mathbb{Z}
_{2}^{2}\text{,}
\end{equation*}%
with
\begin{equation*}
\phi _{L}(0)=(00)\text{, }\phi _{L}(1)=(01)\text{, }\phi _{L}(2)=(11)\text{,
}\phi _{L}(3)=(10)\text{,}
\end{equation*}%
turns out to be a non-linear isometry from $(\mathbb{Z}_{4}^{n},\text{%
Lee distance})$ to $(\mathbb{F}_{2}^{2n},\text{Hamming distance})$. This
means that if $C$ is a linear code over $\mathbb{Z}_{4}$ of length $n$, size
$M$ and minimum Lee distance $d$, then $\phi _{L}(C)$ is a possibly
non-linear binary code with parameters $[n,M,d]$.

When extending the Lee distance from $\mathbb{Z}_4$ to the more general ring extensions, the homogeneous weight was mostly used. The homogeneous weight has a lot of advantages, which made them useful in constructing codes over rings. It is related to exponential sums (see   \cite{consheise} and \cite{voloch} for example), making it easier to find bounds by using some number theoretic arguments such as the Weil bound. The homogeneous weight also gives rise to codes with high divisibility properties.

Another extension of the Lee weight is also possible and has been used by different researchers. For example the weight $w_L$ on $\mathbb{Z}_{2^s}$, defined by
\begin{equation*}
w _{L}(x)=\left\{
\begin{array}{ll}
x & \text{if }x\leq 2^{s-1}\text{,} \\
2^{s}-x & \text{if }x>2^{s-1}\text{.}%
\end{array}%
\right.
\end{equation*} was used partly in \cite{carlet}, \cite{dougherty} and \cite{aipyildizoo}. A simple Gray map for this weight maps codes over $\mathbb{Z}_{2^s}$ to (mostly) nonlinear binary codes.

This extension was generalized to $\mathbb{Z}_m$ as the Lee weight by letting $w_L(x) = \min\{x, m-x\}$ in some works, however no Gray map has been offered for such a weight.

In this work, we generalize the Lee weight on $\mathbb{Z}_{2^s}$ given above to the rings $\mathbb{Z}_{p^s}$ and the Galois rings $GR(p^s,m)$, together with a simple description of a Gray map projecting codes over $\Z_{p^s}$ to codes over the finite prime field $\mathbb{F}_p = \mathbb{Z}_p$. We study codes over $\mathbb{Z}_{p^s}$ together with this Lee weight from many angles such as Singleton bounds, independence, kernels and duality.

The rest of the paper is organized as follows: In Section 2, we recall the extended Lee weight, the Gray map and some properties for codes over $\mathbb{Z}_{p^s}$ from \cite{twmsyildizoo}. In Section 3 some
bounds on codes over $\mathbb{Z}_{p^{s}}$\ concerning both length and size
of the codes are given and MLDS and MLDR codes are defined accordingly. In
Section 4 the notions of kernel and independence are investigated. In
Section 5 some results about self-duality and self-orthogonality are found.

\section{The Extended Lee Weight and Its Gray Map}

We recall that a new weight on $\mathbb{Z}_{p^{s}}$, a generalization of $%
w_{L} $, was defined in \cite{twmsyildizoo} as follows:
\begin{equation*}
w_{L}(x):=\left\{
\begin{array}{lll}
x & \text{if }x\leq p^{s-1}\text{,} &  \\
p^{s-1} & \text{if }p^{s-1}\leq x\leq p^{s}-p^{s-1}\text{,} &  \\
p^{s}-x & \text{if }p^{s}-p^{s-1}<x\leq p^{s}-1\text{,} &
\end{array}%
\right.
\end{equation*}%
where $p$ is prime. Note that for $p=2$ and $s=2$ this reduces to the Lee
weight for $\mathbb{Z}_{4}$ and for $p=2$ and any $s$, this is the weight
that was used briefly by Carlet in \cite{carlet} and by Dougherty and Fern\'{a}ndez-C\'{o}rdoba in \cite%
{dougherty}. We can define a
Gray map from $\mathbb{Z}_{p^{s}}$ to $\mathbb{Z}_{p}^{p^{s-1}}$ just as was
done for the homogeneous weight as follows:

\begin{equation*}
\begin{array}{lll}
0 & \rightarrow & (000\cdot \cdot \cdot 000)\text{,} \\
1 & \rightarrow & (100\cdot \cdot \cdot 000)\text{,} \\
2 & \rightarrow & (110\cdot \cdot \cdot 000)\text{,} \\
& \cdot &  \\
& \cdot &  \\
p^{s-1} & \rightarrow & (111\cdot \cdot \cdot 111)\text{,} \\
p^{s-1}+1 & \rightarrow & (211\cdot \cdot \cdot 111)\text{,} \\
p^{s-1}+2 & \rightarrow & (221\cdot \cdot \cdot 111)\text{,} \\
& \cdot &  \\
& \cdot &  \\
p^{s-1}+p^{s-1}-1 & \rightarrow & (222\cdot \cdot \cdot 221)\text{,} \\
2p^{s-1} & \rightarrow & (222\cdot \cdot \cdot 222)\text{,} \\
2p^{s-1}+1 & \rightarrow & (322\cdot \cdot \cdot 222)\text{,} \\
& \cdot &  \\
& \cdot &  \\
2p^{s-1}+p^{s-1}-1 & \rightarrow & (333\cdot \cdot \cdot 332)\text{,} \\
3p^{s-1} & \rightarrow & (333\cdot \cdot \cdot 333)\text{,} \\
& \cdot &  \\
& \cdot &  \\
(p-1)p^{s-1} & \rightarrow & ((p-1)\cdot \cdot \cdot (p-1))\text{,} \\
(p-1)p^{s-1}+1 & \rightarrow & (0(p-1)\cdot \cdot \cdot (p-1))\text{,} \\
& \cdot &  \\
& \cdot &  \\
p^{s}-2 & \rightarrow & (000\cdot \cdot \cdot 0(p-1)(p-1))\text{,} \\
p^{s}-1 & \rightarrow & (000\cdot \cdot \cdot 00(p-1))\text{.}%
\end{array}%
\end{equation*}%
We simply put a $1$ in the first $x$ coordinates and a $0$ in the other
coordinates for all $x\leq p^{s-1}$. If $x>p^{s-1}$ then the Gray map takes $%
x$ to $\overline{q}+\phi _{L}(r)$, where $\phi _{L}$ is the Gray map for $%
w_{L}$, $\overline{q}=(qqq\cdot \cdot \cdot qqq)$ and $q$ and $r$ are such
that%
\begin{equation*}
x=qp^{s-1}+r\text{,}
\end{equation*}%
which can be found by division algorithm. Here, $0\leq x\leq p^{s}-1$, $%
0\leq q\leq p-1$, $0\leq r\leq p^{s-1}-1$. Here by putting $p=2$, we get the
same Gray map given in \cite{aipyildizoo} and \cite{dougherty}, which is%
\begin{equation*}
\begin{array}{lll}
0 & \rightarrow & (000\cdot \cdot \cdot 000) \\
1 & \rightarrow & (100\cdot \cdot \cdot 000) \\
2 & \rightarrow & (110\cdot \cdot \cdot 000) \\
& \cdot &  \\
& \cdot &  \\
2^{s-1} & \rightarrow & (111\cdot \cdot \cdot 111) \\
2^{s-1}+1 & \rightarrow & (011\cdot \cdot \cdot 111) \\
2^{s-1}+2 & \rightarrow & (001\cdot \cdot \cdot 111) \\
& \cdot &  \\
& \cdot &  \\
2^{s}-2 & \rightarrow & (000\cdot \cdot \cdot 011) \\
2^{s}-1 & \rightarrow & (000\cdot \cdot \cdot 001)\text{.}%
\end{array}%
\end{equation*}

As an example, when $p=3$, $s=2$ we get the extended Lee weight on $\Z_9$, which is a non-homogenous weight and is defined as
\begin{equation*}
w_{L}(x):=\left\{
\begin{array}{lll}
x & \text{if }x\leq 3\text{,} &  \\
3 & \text{if }3\leq x\leq 6\text{,} &  \\
9-x & \text{if }6 <x\leq 8\text{,} &
\end{array}%
\right.
\end{equation*}%
The Gray map takes $\Z_9$ to $\Z_3^3$ as follows:
$$\begin{array}{lll}
0 & \rightarrow & (000) \\
1 & \rightarrow & (100) \\
2 & \rightarrow & (110) \\
3 & \rightarrow & (111) \\
4 & \rightarrow & (211) \\
5 & \rightarrow & (221) \\
6 & \rightarrow & (222) \\
7 & \rightarrow & (022) \\
8 & \rightarrow & (002)\text{.}%
\end{array}$$

We define the Lee distance on $\mathbb{Z}_{p^{s}}$ as
\begin{equation}
d_{L}(x,y):=w_{L}(x-y), \:\:\:\:\:\:x,y\in \mathbb{Z}_{p^{s}}.
\end{equation}%
Note that this is a metric on $\mathbb{Z}_{p^{s}}$ and by extending $w_{L}$
and $d_{L}$ linearly to $(\mathbb{Z}_{p^{s}})^{n}$ in an obvious way, we get
a weight and a metric on $(\mathbb{Z}_{p^{s}})^{n}$.

\begin{theorem}
The map $\phi _{L}:(%
\mathbb{Z}
_{p^{s}},d_{L})\longrightarrow (\mathbb{F}_{p}^{p^{s-1}},d_{H})$ is a
distance preserving (not necessarily linear) map, where $d_{L}$ and $d_{H}$
denote the Lee and the Hamming distances respectively.
\end{theorem}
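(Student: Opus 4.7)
The plan is to reduce distance preservation to the identity
\[ w_H(\phi_L(x) - \phi_L(y)) = w_L(x-y) \]
for all $x, y \in \mathbb{Z}_{p^s}$, and then to compute both sides from an explicit coordinate description of $\phi_L$. Using the division algorithm representation $x = q_x p^{s-1} + r_x$ with $0 \le q_x \le p-1$ and $0 \le r_x \le p^{s-1}-1$, the construction recalled in Section 2 shows that $\phi_L(x)$ is the vector in $\mathbb{F}_p^{p^{s-1}}$ whose first $r_x$ coordinates are $q_x + 1$ (reduced mod $p$) and whose remaining $p^{s-1} - r_x$ coordinates are $q_x$ (reduced mod $p$). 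Applying this with $y = 0$ also gives $w_H(\phi_L(x)) = w_L(x)$ as a byproduct, which is the $n=1$ weight-preservation statement.

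Next, one compares $\phi_L(x)$ and $\phi_L(y)$ block by block. After using the symmetry $w_L(-u) = w_L(u)$ to assume $r_x \ge r_y$, the coordinate-wise difference is $q_x - q_y$ on the first $r_y$ coordinates, $q_x - q_y + 1$ on the middle $r_x - r_y$ coordinates, and $q_x - q_y$ on the last $p^{s-1} - r_x$ coordinates. Thus $w_H(\phi_L(x) - \phi_L(y))$ depends only on which of the residues $k := q_x - q_y \pmod p$ and $k + 1 \pmod p$ vanish, giving three cases: $k = 0$ (only the middle block contributes, total $r_x - r_y$); $k = p-1$ (only the outer blocks contribute, total $r_y + (p^{s-1} - r_x)$); and $1 \le k \le p-2$ (all blocks contribute, total $p^{s-1}$).

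On the other side, $x - y \equiv k p^{s-1} + (r_x - r_y) \pmod{p^s}$, and this representative already lies in $[0, p^s - 1]$. Substituting into the three-part definition of $w_L$ shows that $x - y$ falls in $[0, p^{s-1}]$ when $k = 0$, in the flat middle interval $[p^{s-1}, p^s - p^{s-1}]$ when $1 \le k \le p-2$, and in $[p^s - p^{s-1}, p^s - 1]$ when $k = p-1$. A short calculation in each case matches $w_L(x-y)$ with the Hamming count above: one obtains $r_x - r_y$, $p^{s-1}$, and $p^{s-1} - (r_x - r_y)$ respectively.

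I expect the main obstacle to be the $k = p-1$ case, where one must recognize that the middle block collapses modulo $p$ precisely because $x - y$ has entered the decreasing upper region of $w_L$, and that the surviving contribution $r_y + (p^{s-1} - r_x)$ equals $p^s - (x - y)$. The remaining verifications are routine book-keeping; the fact that the middle range $1 \le k \le p-2$ is vacuous when $p = 2$ causes no problem, since for $p = 2$ only $k \in \{0, 1\}$ arise and are already covered by the other two cases.
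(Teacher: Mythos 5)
Your argument is correct: the block description of $\phi_L(x)$ (first $r_x$ coordinates equal to $q_x+1$, the rest equal to $q_x$, mod $p$) is exactly what the construction in Section 2 gives, the reduction modulo $p^s$ of $x-y$ to the canonical representative $kp^{s-1}+(r_x-r_y)$ is valid since $0\le k\le p-1$ and $0\le r_x-r_y\le p^{s-1}-1$, and the three cases $k=0$, $1\le k\le p-2$, $k=p-1$ (including the boundary value $x-y=p^s-p^{s-1}$, where both clauses of $w_L$ agree) all check out. The paper itself does not prove the theorem but defers to the cited reference \cite{twmsyildizoo}; your direct coordinate-wise case analysis is the standard verification one would find there, so nothing further is needed.
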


The proof of this theorem can be found in \cite{twmsyildizoo} with the
following corollary:

\begin{corollary}
If $C$ is a linear code over $\mathbb{Z}_{p^{s}}$ of length $n$, size $M$
and minimum \textit{Lee} distance $d$, then $\phi _{L}(C)$ is a (possibly
non-linear) code over $\mathbb{F}_{p}$ of length $np^{s-1}$, size $M$ and
minimum \textit{Hamming} distance $d$.
\end{corollary}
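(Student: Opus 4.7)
The plan is to verify the identity $d_H(\phi_L(x),\phi_L(y))=w_L(x-y)$ directly. The key preliminary step is to repackage the defining table of $\phi_L$ into a single closed form: writing $x=qp^{s-1}+r$ with $0\le q\le p-1$ and $0\le r\le p^{s-1}-1$, the $i$-th coordinate of $\phi_L(x)\in \mathbb{F}_p^{p^{s-1}}$ equals $q+1\pmod p$ for $1\le i\le r$ and equals $q$ for $r<i\le p^{s-1}$. A row-by-row comparison with the displayed table confirms this, the only mild point being the wrap at $q=p-1$, where the value $q+1$ becomes $0$ in $\mathbb{F}_p$.

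With this in hand, fix $x=q_1p^{s-1}+r_1$ and $y=q_2p^{s-1}+r_2$, and assume $r_1\le r_2$ by the symmetry of $d_H$ and $w_L$. The closed form shows that $\phi_L(x)-\phi_L(y)$ takes only two values in $\mathbb{F}_p$: namely $\delta:=q_1-q_2\pmod p$ on the $p^{s-1}-(r_2-r_1)$ ``outer'' coordinates $i\le r_1$ or $i>r_2$, and $\delta-1\pmod p$ on the $r_2-r_1$ ``middle'' coordinates. Tallying which of these two constants is nonzero yields
\[
d_H\bigl(\phi_L(x),\phi_L(y)\bigr)=\begin{cases} r_2-r_1 & \text{if } \delta\equiv 0,\\ p^{s-1}-(r_2-r_1) & \text{if } \delta\equiv 1,\\ p^{s-1} & \text{otherwise.}\end{cases}
\]

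To finish, I would write $z=x-y\in\mathbb{Z}_{p^s}$ in the canonical form $z=ap^{s-1}+b$ with $0\le a\le p-1$ and $0\le b\le p^{s-1}-1$, tracking the borrow in the subtraction: if $r_1=r_2$ then $(a,b)=(\delta,0)$, while if $r_1<r_2$ the borrow yields $(a,b)=(\delta-1\bmod p,\;p^{s-1}-(r_2-r_1))$ with $b>0$. Feeding these values into the three-piece definition of $w_L$ recovers exactly the three cases above: $\delta\equiv 0$ with $r_1<r_2$ gives $a=p-1$ and $b>0$, landing in the top piece so $w_L(z)=p^{s-1}-b=r_2-r_1$; $\delta\equiv 1$ with $r_1<r_2$ gives $a=0$, landing in the first piece so $w_L(z)=b=p^{s-1}-(r_2-r_1)$; and all remaining cases have either $a\in\{1,\dots,p-2\}$ or $(a,b)=(\delta,0)$ with $\delta\ne 0$, which places $z$ in the middle piece where $w_L(z)=p^{s-1}$. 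The only real obstacle is the bookkeeping: threading the borrow in $x-y$ through correctly and matching the resulting $(a,b)$ to the appropriate piece of the piecewise definition of $w_L$; once the closed form of $\phi_L$ is written down, everything else is arithmetic in $\mathbb{Z}_{p^s}$.
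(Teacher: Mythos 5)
Your argument is correct and takes essentially the same route as the paper: the corollary is an immediate consequence of the fact that $\phi_L$ is a distance-preserving map from $(\mathbb{Z}_{p^s}^n,d_L)$ to $(\mathbb{F}_p^{p^{s-1}n},d_H)$ (Theorem 1, whose proof the paper defers to \cite{twmsyildizoo}), and your closed form $\phi_L(qp^{s-1}+r)=\overline{q}+\phi_L(r)$ together with the two-value/borrow case analysis is a correct direct verification of that isometry. The only step left implicit is that a distance-preserving map is injective (since $w_L(z)=0$ iff $z=0$), which is what gives $|\phi_L(C)|=M$.
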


A Gray map from $GR(p^{s},m)$ to $\mathbb{F}_{p}^{p^{s-1}m}$ can also be
defined by extending this map (see \cite{twmsyildizoo}, Section 3), which
means that most of the work done in this paper is applicable to Galois rings.

\section{Singleton Bounds For Codes Over $%
\mathbb{Z}
_{p^{s}}$}

A Singleton bound for codes over a finite quasi-Frobenius ring is already
given in \cite{shiromoto2} as an MDS bound. Since this result is given for
any weight function, it can be specified for the extended Lee weight.

\begin{definition}[Complete weight]
\label{compwfuncdefn}\cite{shiromoto2} Let $R$ be a finite commutative
quasi-Frobenius ring, and let $V:=R^{n}$ be a free module of rank $n$
consisting of all $n$-tuples of elements of $R$. For every $x=(x_{1},\cdot
\cdot \cdot ,x_{n})\in V$ and $r\in R$, the complete weight of $x$ is
defined by%
\begin{equation}
n_{r}(x):=\left\vert \left\{ i\left\vert x_{i}=r\right. \right\} \right\vert
\text{.}  \label{comwfunc}
\end{equation}
\end{definition}

\begin{definition}[General weight function]
\label{genwfuncdefn}\cite{shiromoto2} Let $a_{r}$,$(0\neq )r\in R$, be
positive real numbers, and set $a_{0}=0$. Then%
\begin{equation}
w(x):=\sum\limits_{r\in R}a_{r}n_{r}(x)  \label{genwfunc}
\end{equation}%
is called a general weight function.
\end{definition}

Note that when $a_{r}=1$, $r\in R - \{0 \}$, $w(x)$ gives the Hamming weight
of $x$.

The following theorem gives a Singleton bound for any finite quasi-Frobenius
ring and any weight function.

\begin{theorem}
\label{mdsthmshiro}\cite{shiromoto2} Let $C$ be a code of length $n$ over a
finite commutative $QF$ ring $R$. Let $w(x)$ be a general weight function on
$C$, as in (\ref{genwfunc}), and with maximum $a_{r}-$value $A$. Suppose the
minimum weight of $w(x)$ on $C$ is $d$. Then%
\begin{equation}
\left\lfloor \frac{d-1}{A}\right\rfloor \leq n-\log _{\left\vert
R\right\vert }\left\vert C\right\vert \text{,}  \label{mdsgeneral}
\end{equation}%
where $\left\lfloor b\right\rfloor $ is the integer part of $b$.
\end{theorem}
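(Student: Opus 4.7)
The plan is to mimic the classical Singleton-bound argument by puncturing the code at a carefully chosen number of coordinates and showing that the puncturing map remains injective on $C$, so that no information is lost.

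Set $k := \left\lfloor \frac{d-1}{A}\right\rfloor$, so that $kA \leq d-1 < d$. First I would fix any subset $I \subseteq \{1,2,\dots,n\}$ of size $k$ and let $\pi_I\colon R^n \to R^{n-k}$ denote the projection that deletes the coordinates in $I$. Put $C' := \pi_I(C)$. The key claim is that $\pi_I$ restricted to $C$ is injective. To see this, suppose $c_1,c_2 \in C$ satisfy $\pi_I(c_1) = \pi_I(c_2)$. Then $c := c_1 - c_2 \in C$ is supported on $I$. Using the formula $w(c) = \sum_{r \in R} a_r\,n_r(c)$ and the fact that each nonzero coordinate of $c$ contributes at most $A$ to the weight, one obtains $w(c) \leq |I| \cdot A = kA \leq d-1$. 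Since the minimum weight of $C$ is $d$, this forces $c = 0$, hence $c_1 = c_2$.

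Having established injectivity, $|C| = |C'| \leq |R|^{n-k}$ because $C' \subseteq R^{n-k}$. Taking logarithms base $|R|$ yields $\log_{|R|}|C| \leq n-k$, i.e.\ $k \leq n - \log_{|R|}|C|$, which is precisely the claimed bound. The quasi-Frobenius hypothesis on $R$ is not really used in this puncturing argument itself; it enters the framework of \cite{shiromoto2} through the ambient theory that makes $\log_{|R|}|C|$ the natural replacement for dimension and keeps the trivial bound $|C'|\leq |R|^{n-k}$ meaningful for arbitrary (not necessarily free) codes.

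I do not expect a serious obstacle. The only point needing a little care is the weight estimate $w(c) \leq kA$ on a codeword supported in $I$: this relies on the definition $A = \max_{r\neq 0} a_r$ together with $a_0 = 0$, so that only the (at most $k$) nonzero coordinates of $c$ contribute, each by at most $A$. Everything else reduces to counting cosets of the kernel of $\pi_I|_C$, and the argument goes through for any weight function of the general form \eqref{genwfunc}, so that specializing $a_r = w_L(r)$ gives the Singleton bound for the extended Lee weight on $\mathbb{Z}_{p^s}$ as a direct corollary.
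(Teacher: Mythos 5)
Your argument is correct. Setting $k=\left\lfloor (d-1)/A\right\rfloor$, any codeword supported on a $k$-subset $I$ has weight at most $kA\le d-1<d$ (since $a_0=0$ and each nonzero coordinate contributes at most $A$), hence is zero; so the puncturing map is injective on $C$, $|C|\le |R|^{n-k}$, and the bound follows by taking $\log_{|R|}$. You are also right that the quasi-Frobenius hypothesis plays no role in this counting argument.

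It is worth noting, however, that your route is genuinely different from the one used in the paper and in \cite{shiromoto2}. The paper does not prove Theorem \ref{mdsthmshiro} at all (it is quoted and then specialized to $w_L$ in Theorem \ref{mdsbound}); the Singleton-type bound it does prove, Theorem \ref{singboundthm}, is established by the homological method of \cite{shiromoto}: one shows $C(M)^{\ast}=0$ for $|M|<d/p^{s-1}$ (Lemma \ref{singboundlemma}), feeds this into the basic exact sequence of Lemma \ref{mybasicexactlemma}, and uses projectivity of $V(M)$ to split a short exact sequence and read off $free~rank(C^{\bot})\ge |M|$. Your elementary puncturing argument buys a short, self-contained proof of the cardinality bound \eqref{mdsgeneral} valid over any finite ring, but it cannot replace the exact-sequence machinery for the companion bound $\left\lfloor (d-1)/p^{s-1}\right\rfloor\le n-rank(C)$: there the relevant quantity is the rank rather than $\log_{|R|}|C|$, injectivity of the puncturing alone gives no control on the module type of the image, and the splitting step (hence the structure theory over $\mathbb{Z}_{p^{s}}$) is genuinely needed. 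So your proof is a clean alternative for the statement as posed, while the paper's approach is the one that generalizes to the MLDR setting.
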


Since $%
\mathbb{Z}
_{p^{s}}$ is a finite commutative Frobenius ring by letting $w(x)=w_{L}(x)$,
we have $p^{s-1}$ as the maximum $a_{r}-$value. Applying these informations
to Theorem we get the following:

\begin{theorem}
\label{mdsbound}Let $C$ be a code of length $n$ over $%
\mathbb{Z}
_{p^{s}}$ with minimum distance $d$. Then%
\begin{equation}
\left\lfloor \frac{d-1}{p^{s-1}}\right\rfloor \leq n-\log _{p^{s}}\left\vert
C\right\vert \text{.}  \label{mdslee}
\end{equation}
\end{theorem}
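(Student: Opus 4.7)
The plan is to apply Theorem~\ref{mdsthmshiro} directly, since the framework developed by Shiromoto already contains the inequality we want; all that is needed is a careful identification of parameters.

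First, I would verify that the hypotheses of Theorem~\ref{mdsthmshiro} apply to $(\mathbb{Z}_{p^s}, w_L)$. The ring $\mathbb{Z}_{p^s}$ is a finite commutative chain ring and in particular a finite commutative quasi-Frobenius ring, so the ambient setting is covered. Next, I would realize $w_L$ as a general weight function in the sense of Definition~\ref{genwfuncdefn} by setting
\begin{equation*}
a_r := w_L(r) \quad \text{for } r \in \mathbb{Z}_{p^s} \setminus \{0\}, \qquad a_0 := 0.
\end{equation*}
Since $w_L$ was extended to $(\mathbb{Z}_{p^s})^n$ coordinate-wise, for every $x=(x_1,\dots,x_n)$ we have $w_L(x)=\sum_{i=1}^{n} w_L(x_i) = \sum_{r\in \mathbb{Z}_{p^s}} a_r\, n_r(x)$, which is exactly (\ref{genwfunc}).

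Second, I would determine the constant $A$ appearing in Theorem~\ref{mdsthmshiro}, namely the maximum of the coefficients $a_r$. Reading off the three cases in the piecewise definition of $w_L$, the value $x$ grows from $0$ to $p^{s-1}$ on the initial segment, stays at the plateau value $p^{s-1}$ on the middle interval $p^{s-1}\leq x\leq p^s-p^{s-1}$, and then decreases symmetrically from $p^{s-1}$ down to $1$. Thus $A = \max_{r} a_r = p^{s-1}$.

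Finally, I would substitute $|R| = p^s$ and $A = p^{s-1}$ into the inequality (\ref{mdsgeneral}), which yields exactly (\ref{mdslee}). I do not anticipate any serious obstacle: the entire argument is a parameter-matching exercise, and the only non-trivial observations are the quasi-Frobenius (in fact, chain) structure of $\mathbb{Z}_{p^s}$ and the identification of the plateau value $p^{s-1}$ as the maximum of the extended Lee weight on a single coordinate.
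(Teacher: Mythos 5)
Your proposal is correct and matches the paper's own derivation exactly: the paper obtains Theorem~\ref{mdsbound} by specializing Theorem~\ref{mdsthmshiro} to $R=\mathbb{Z}_{p^s}$ with $a_r = w_L(r)$ and maximum coefficient $A=p^{s-1}$, just as you do. Your write-up is in fact slightly more careful about verifying the quasi-Frobenius hypothesis and the coordinate-wise decomposition, but there is no substantive difference.
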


Codes meeting this bound are called MLDS (Maximum Lee Distance Separable)
codes. In \cite{shiromoto}, another bound was found over $%
\mathbb{Z}
_{l}$ with a different generalization of the Lee weight. Now we will
find a similar result for codes over $%
\mathbb{Z}
_{p^{s}}$\ with $w_{L}(x)$ by the same method used.

\begin{definition}[Rank, Free-rank]
\label{rankfreerank}Let $C$ be any finitely generated submodule of $%
\mathbb{Z}
_{p^{s}}^{n}$, that is isomorphic to%
\begin{equation}
\mathbb{Z}
_{p^{s}}/p^{a_{1}}%
\mathbb{Z}
_{p^{s}}\oplus
\mathbb{Z}
_{p^{s}}/p^{a_{2}}%
\mathbb{Z}
_{p^{s}}\oplus \cdot \cdot \cdot \oplus
\mathbb{Z}
_{p^{s}}/p^{a_{n-1}}%
\mathbb{Z}
_{p^{s}}\text{,}  \label{submoduleisom}
\end{equation}%
where $a_{i}$ are positive integers with $p^{a_{1}}|p^{a_{2}}|\cdot \cdot
\cdot |p^{a_{n-1}}|p^{s}$. Then%
\begin{equation}
rank(C):=\left\vert \left\{ i\left\vert a_{i}\neq 0\right. \right\}
\right\vert \text{,}  \label{rank}
\end{equation}%
is called the rank of $C$ and%
\begin{equation}
free~rank(C):=\left\vert \left\{ i\left\vert a_{i}=s\right. \right\} \right\vert
\label{frank}
\end{equation}%
is called the free rank of $C$.
\end{definition}


Any code over $\Z_{p^s}$ has a generator matrix of the form:
\begin{equation}
G=\left[
\begin{array}{ccccccc}
I_{\delta _{0}} & A_{0,1} & A_{0,2} & A_{0,3} & \cdot \cdot \cdot & \cdot
\cdot \cdot & A_{0,s} \\
0 & pI_{\delta _{1}} & pA_{1,2} & pA_{1,3} & \cdot \cdot \cdot & \cdot \cdot
\cdot & pA_{1,s} \\
0 & 0 & p^{2}I_{\delta _{2}} & p^{2}A_{2,3} & \cdot \cdot \cdot & \cdot
\cdot \cdot & p^{2}A_{2,s} \\
\cdot \cdot \cdot & \cdot \cdot \cdot & 0 & \cdot \cdot \cdot & \cdot \cdot
\cdot & \cdot \cdot \cdot & \cdot \cdot \cdot \\
\cdot \cdot \cdot & \cdot \cdot \cdot & \cdot \cdot \cdot & \cdot \cdot \cdot
& \cdot \cdot \cdot & \cdot \cdot \cdot & \cdot \cdot \cdot \\
0 & 0 & 0 & \cdot \cdot \cdot & p^{s-2}I_{\delta _{s-2}} & p^{s-2}A_{s-2,s-1}
& p^{s-2}A_{s-2,s} \\
0 & 0 & 0 & \cdot \cdot \cdot & 0 & p^{s-1}I_{\delta _{s-1}} &
p^{s-1}A_{s-1,s}%
\end{array}%
\right] \text{.}  \label{genmatc}
\end{equation}%

Then a code $C$ over $%
\mathbb{Z}
_{p^{s}}^{n}$ is of type $(p^{s})^{\delta _{0}}(p^{s-1})^{\delta _{1}}\cdot
\cdot \cdot (p)^{\delta _{s-1}}$, and %
\begin{eqnarray*}
rank(C) &=&\delta _{0}+\delta _{1}+\cdot \cdot \cdot +\delta _{s-1}\text{,}
\\
free~rank(C) &=&\delta _{0}
\end{eqnarray*}%

\ Let $C^{\bot }$, namely\ the dual of $C$, be defined as%
\begin{equation*}
C^{\bot }=\left\{ v\in
\mathbb{Z}
_{p^{s}}^{n}|\left\langle v,w\right\rangle =0\text{ for all }w\in C\right\}
\text{,}
\end{equation*}%
where $\left\langle v,w\right\rangle =\sum v_{i}w_{i}$ (mod $p^{s}$).
The code  $C^{\bot }$ is
isomorphic to%
\begin{equation*}
\mathbb{Z}
_{p^{s}}/p^{s-a_{1}}%
\mathbb{Z}
_{p^{s}}\oplus
\mathbb{Z}
_{p^{s}}/p^{s-a_{2}}%
\mathbb{Z}
_{p^{s}}\oplus \cdot \cdot \cdot \oplus
\mathbb{Z}
_{p^{s}}/p^{s-a_{n-1}}%
\mathbb{Z}
_{p^{s}}\text{.}
\end{equation*}%
From
\cite{shiromoto}, \cite{modipzmpark}, \cite{indepcodoverringsdoug}, \cite%
{dougshiromdr}, \cite{dougherty}, and the definitions above, the
relationship between the rank of a code and its dual's free rank can be
given as follows:%
\begin{equation}
rank(C)+free~rank(C^{\bot })=n  \label{ranknull}
\end{equation}%
For a submodule $D\subseteq V:=(%
\mathbb{Z}
_{p^{s}})^{n}$ and a subset $M\subseteq N:=\left\{ 1,2,\cdot \cdot \cdot
,n\right\} $, we define%
\begin{equation}
\begin{array}{l}
D(M):=\left\{ x\in D\left\vert \text{supp}(x)\subseteq M\right. \right\}
\text{,} \\
D^{\ast }:=Hom_{%
\mathbb{Z}
_{p^{s}}}(D,%
\mathbb{Z}
_{p^{s}})\text{,}%
\end{array}
\label{dhom}
\end{equation}%
where%
\begin{equation}
\text{supp}(x):=\left\{ i\in N\left\vert x_{i}\neq 0\right. \right\} \text{.}
\label{support}
\end{equation}%
From the fundamental theorem of finitely generated abelian groups, we have $%
D^{\ast }\cong D$. Shiromoto also gave the following basic exact sequence:

\begin{lemma}
\label{basicexactlemma}\cite{shiromoto}Let $C$ be a code of length $n$ over $%
\mathbb{Z}
_{l}$ and $M\subseteq N$. Then there is an exact sequence as $%
\mathbb{Z}
_{l}$-modules
\begin{subequations}
\begin{equation}
0\rightarrow C^{\bot }(m)\overset{inc}{\rightarrow }V(M)\overset{f}{%
\rightarrow }C^{\ast }\overset{res}{\rightarrow }C(N-M)^{\ast }\rightarrow 0%
\text{,}  \label{basicexact}
\end{equation}%
where the maps $inc$, $res$ denote the inclusion map, the restriction map,
respectively, and $f$ is a $%
\mathbb{Z}
_{l}$-homomorphism such that
\end{subequations}
\begin{equation}
\begin{array}{ll}
f: & V\rightarrow D^{\ast } \\
& y\rightarrow (\hat{y}:x\rightarrow \left\langle x,y\right\rangle \text{.}%
\end{array}
\label{fhom}
\end{equation}
\end{lemma}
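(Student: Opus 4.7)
The central tool is that $\Z_l$ is a finite commutative Frobenius ring and hence self-injective; equivalently, the functor $\mathrm{Hom}_{\Z_l}(-,\Z_l)$ is exact on finitely generated $\Z_l$-modules. With this in hand, the plan is to verify exactness at each of the three interior nodes of the sequence in turn.

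First I would handle the easy parts. For $y\in C^{\bot}(M)$ and $x\in C$, $f(y)(x)=\langle x,y\rangle=0$ by definition of the dual code, so $f\circ \mathrm{inc}=0$. For $y\in V(M)$ and $x\in C(N-M)$, the supports $\mathrm{supp}(y)\subseteq M$ and $\mathrm{supp}(x)\subseteq N-M$ are disjoint, giving $\langle x,y\rangle=0$, so $\mathrm{res}\circ f|_{V(M)}=0$. Exactness at $V(M)$ is then immediate, since $\ker(f|_{V(M)})=\{y\in V(M):\langle x,y\rangle=0\text{ for all }x\in C\}=V(M)\cap C^{\bot}=C^{\bot}(M)$. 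Surjectivity of $\mathrm{res}$ follows from self-injectivity applied to the inclusion $C(N-M)\hookrightarrow C$: every homomorphism $C(N-M)\to \Z_l$ extends to one on $C$.

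The decisive step is exactness at $C^{*}$. Given $\phi\in\ker(\mathrm{res})$, the hypothesis $\phi|_{C(N-M)}=0$ is equivalent to $\phi$ vanishing on $\ker(\pi_M)$, where $\pi_M:C\to \Z_l^{|M|}$ is the coordinate projection $(x_i)_{i\in N}\mapsto (x_i)_{i\in M}$. Consequently $\phi$ factors as $\phi=\psi\circ \pi_M$ for some $\psi\in \mathrm{Hom}_{\Z_l}(\pi_M(C),\Z_l)$. By self-injectivity, $\psi$ extends to $\tilde{\psi}:\Z_l^{|M|}\to \Z_l$, and $\tilde{\psi}$ is represented by a vector $y\in \Z_l^{|M|}$ via the standard inner product pairing. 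Viewing $y$ as an element of $V(M)$ (with zeros outside $M$), we then compute $f(y)(x)=\langle x,y\rangle=\tilde{\psi}(\pi_M(x))=\phi(x)$ for every $x\in C$, so $f(y)=\phi$ and $\phi\in \mathrm{Im}(f|_{V(M)})$.

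I expect the main obstacle to be precisely this last step. The naive plan of extending $\phi\in C^{*}$ directly to a functional on $V$ and then projecting its representing vector onto $V(M)$ fails, because such a projection typically alters the functional's restriction to $C$. The correct remedy is to first recognize that $\phi$ factors through the coordinate projection $\pi_M$, and then to apply self-injectivity on the ambient module $\Z_l^{|M|}$ rather than on $V$ itself.
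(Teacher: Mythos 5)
Your proof is correct. Note that the paper does not prove this lemma at all --- it is quoted verbatim from Shiromoto's paper \cite{shiromoto} --- but your argument is complete and is essentially the standard one from that source: the only nontrivial points are exactness at $C^{*}$ and surjectivity of $res$, and both reduce, exactly as you say, to the self-injectivity of the quasi-Frobenius ring $\mathbb{Z}_{l}$, with the key observation that $\ker(res)$ consists of the functionals vanishing on $C(N-M)=\ker(\pi_{M}|_{C})$ and hence factoring through $\pi_{M}(C)\subseteq \mathbb{Z}_{l}^{|M|}$, where a representing vector in $V(M)$ can be produced.
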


We can adjust Lemma \ref{basicexactlemma} to our case:

\begin{lemma}
\label{mybasicexactlemma}Let $C$ be a code of length $n$ over $%
\mathbb{Z}
_{p^{s}}$ and $M\subseteq N$. Then there is an exact sequence as $%
\mathbb{Z}
_{p^{s}}$-modules
\begin{subequations}
\begin{equation}
0\rightarrow C^{\bot }(m)\overset{inc}{\rightarrow }V(M)\overset{f}{%
\rightarrow }C^{\ast }\overset{res}{\rightarrow }C(N-M)^{\ast }\rightarrow 0%
\text{,}  \label{mybasicexact}
\end{equation}%
where the maps $inc$, $res$ denote the inclusion map, the restriction map,
respectively, and $f$ is a $%
\mathbb{Z}
_{p^{s}}$-homomorphism such that
\end{subequations}
\begin{equation}
\begin{array}{ll}
f: & V\rightarrow D^{\ast } \\
& y\rightarrow (\hat{y}:x\rightarrow \left\langle x,y\right\rangle )\text{.}%
\end{array}
\label{myfhom}
\end{equation}
\end{lemma}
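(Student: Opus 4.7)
The plan is to mirror Shiromoto's original proof in \cite{shiromoto} verbatim, exploiting the observation that $\Z_{p^s}$ is the special case $l = p^s$ of $\Z_l$, together with the fact that $\Z_{p^s}$ is a finite commutative local principal ideal ring and in particular a self-injective (quasi-Frobenius) ring. Exactness of the sequence (\ref{mybasicexact}) must then be verified at each of its four terms, and every step is purely formal except for two appeals to self-injectivity.

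First I would dispose of the easy endpoints. Exactness at $C^{\bot}(M)$ is the injectivity of the inclusion, which is immediate. Exactness at $V(M)$ amounts to showing $\ker(f|_{V(M)}) = C^{\bot}(M)$: for $y \in V(M)$ the functional $f(y) = \widehat{y}$ sends $x \in C$ to $\langle x, y \rangle$, so $f(y) = 0$ holds iff $y \in C^{\bot}$, and intersecting with $V(M)$ yields exactly $C^{\bot}(M)$.

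Next I would verify exactness at $C^{*}$. The containment $\mathrm{image}(f|_{V(M)}) \subseteq \ker(res)$ is a one-line support argument: if $y \in V(M)$ and $x \in C(N-M)$, then $\mathrm{supp}(x)$ and $\mathrm{supp}(y)$ are disjoint, hence every term of $\langle x, y \rangle$ vanishes. For the reverse containment, given $\varphi \in C^{*}$ with $\varphi|_{C(N-M)} = 0$, I would extend $\varphi$ to a homomorphism $\widetilde{\varphi}: V \to \Z_{p^s}$ using self-injectivity of $\Z_{p^s}$, identify $\widetilde{\varphi}$ with some vector $y \in V$ through the natural isomorphism $V^{*} \cong V$, and then project $y$ onto its $M$-coordinates to produce $y' \in V(M)$; the vanishing hypothesis on $\varphi$ guarantees that the discarded coordinates contribute nothing and hence $f(y') = \varphi$. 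Finally, surjectivity of $res$ at $C(N-M)^{*}$ is one more direct appeal to self-injectivity: any $\psi \in C(N-M)^{*}$ extends along the inclusion $C(N-M) \hookrightarrow C$ to a homomorphism in $C^{*}$ that restricts back to $\psi$.

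The only substantive step in this outline is the repeated invocation of self-injectivity in Steps~3 and~4; this is precisely the place where Shiromoto's argument invokes the quasi-Frobenius hypothesis on $\Z_l$. Since that hypothesis is satisfied by $\Z_{p^s}$, the proof transfers without modification, and everything else reduces to a routine diagram-chase with supports. I do not anticipate any genuine obstacle beyond carefully tracking that the chosen extension $\widetilde{\varphi}$ admits a representative $y$ supported in $M$, which is exactly what the kernel condition on $res$ provides.
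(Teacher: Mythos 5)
Your overall strategy coincides with the paper's: the paper gives no proof of Lemma~\ref{mybasicexactlemma} at all, treating it as the immediate specialization $l=p^{s}$ of Shiromoto's Lemma~\ref{basicexactlemma}, and your treatment of the two endpoints, of exactness at $V(M)$, and of the surjectivity of $res$ via self-injectivity of $\mathbb{Z}_{p^{s}}$ are all correct reconstructions of that argument.

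There is, however, a genuine gap in your verification of exactness at $C^{\ast}$, namely in the containment $\ker(res)\subseteq \mathrm{im}(f|_{V(M)})$. You extend $\varphi$ to $\widetilde{\varphi}\colon V\to\mathbb{Z}_{p^{s}}$, write $\widetilde{\varphi}=\langle\,\cdot\,,y\rangle$ for some $y\in V$, and then assert that the truncation $y'$ of $y$ to the coordinates in $M$ still satisfies $f(y')=\varphi$ because the discarded coordinates ``contribute nothing.'' That is false in general: writing $y=y'+y''$ with $\mathrm{supp}(y'')\subseteq N-M$, the hypothesis $\varphi|_{C(N-M)}=0$ only forces $\langle x,y''\rangle=0$ for $x\in C(N-M)$ (where it is automatic anyway, since $\langle x,y'\rangle=0$ by disjointness of supports), not for all $x\in C$. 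Concretely, take $n=2$, $M=\{1\}$, and $C=\{(a,a):a\in\mathbb{Z}_{p^{s}}\}$; then $C(N-M)=\{0\}$, so the vanishing hypothesis is vacuous, and for $\varphi((a,a))=a$ the perfectly legitimate extension $\widetilde{\varphi}((a,b))=b$ corresponds to $y=(0,1)$, whose truncation to $M$ is $0$, even though $\varphi\neq 0$. The repair is to extend in the right order: since $C(N-M)=\ker(P_{M}|_{C})$ where $P_{M}$ is the coordinate projection onto $M$, the functional $\varphi$ factors through $P_{M}(C)\subseteq V(M)$; extend the induced homomorphism from $P_{M}(C)$ to all of $V(M)$ by self-injectivity, and represent that extension on the free module $V(M)$ by a vector $y\in V(M)$, so that $\varphi(x)=\langle P_{M}(x),y\rangle=\langle x,y\rangle$ for all $x\in C$ because $y$ is supported in $M$ from the outset. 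With this correction the rest of your outline goes through.
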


Note that for any $x\in V$, if supp$(x)\subseteq M\subseteq N$, then for any
general weight function we have $wt(x)\leq a_{r}\left\vert M\right\vert $.
In our case:%
\begin{equation}
w_{L}(x)\leq p^{s-1}\left\vert M\right\vert \text{.}  \label{supportineq}
\end{equation}%
So we have the following lemma for $w_{L}(x)$:

\begin{lemma}
\label{singboundlemma}Let $C$ be a code of length $n$ over $\mathbb{Z}%
_{p^{s}}$, then $C(M)^{\ast }=0$ for any subset $M\subseteq N$ such that $%
\left\vert M\right\vert <d/p^{s-1}$, where $d$ is the minimum Lee weight.

\begin{proof}
For any $\overset{\_}{c}\neq 0\in C$%
\begin{equation}
\left\vert \text{supp}(\overset{\_}{c})\right\vert p^{s-1}\geq w_{L}(\overset%
{\_}{c})\geq d\text{.}  \label{sup1}
\end{equation}%
If $\left\vert M\right\vert <d/p^{s-1}$, then%
\begin{equation}
d>\left\vert M\right\vert p^{s-1}\text{,}  \label{sup2}
\end{equation}%
which means%
\begin{equation}
\left\vert \text{supp}(\overset{\_}{c})\right\vert p^{s-1}\geq d>\left\vert
M\right\vert p^{s-1}  \label{sup3}
\end{equation}%
by (\ref{sup1}) and (\ref{sup2}). But this means $\left\vert \text{supp}(%
\overset{\_}{c})\right\vert >\left\vert M\right\vert $, i.e. supp$(\overset{%
\_}{c})\nsubseteq M$. So $C\cap V(M)=\left\{ 0\right\} $ and $C(M)^{\ast
}=Hom_{\mathbb{Z}_{p^{s}}}(C\cap V(M),\mathbb{Z}_{p^{s}})=0$.
\end{proof}
\end{lemma}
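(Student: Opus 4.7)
The plan is to reduce the statement $C(M)^{\ast }=0$ to the statement $C(M)=\{0\}$, and then to show the latter directly via a support-counting argument. Since $C(M)$ is a finitely generated module over $\mathbb{Z}_{p^{s}}$, the fundamental theorem of finitely generated abelian groups (as noted just before the lemma, $D^{\ast }\cong D$) gives $C(M)^{\ast }\cong C(M)$, so it suffices to prove that no nonzero codeword has its support contained in $M$.

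For the core estimate I would use inequality (\ref{supportineq}), which says that for any $x\in V$ one has $w_{L}(x)\le p^{s-1}\,|\mathrm{supp}(x)|$; this is immediate because each coordinate contributes at most $p^{s-1}$ to the extended Lee weight by definition of $w_{L}$. Applied to any nonzero $c\in C$, combined with $w_{L}(c)\ge d$, this yields
\begin{equation*}
p^{s-1}\,|\mathrm{supp}(c)|\;\ge\;w_{L}(c)\;\ge\;d,
\end{equation*}
so $|\mathrm{supp}(c)|\ge d/p^{s-1}$.

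Now I would bring in the hypothesis $|M|<d/p^{s-1}$. Combined with the previous inequality, this gives $|\mathrm{supp}(c)|>|M|$ for every nonzero codeword $c$, so in particular $\mathrm{supp}(c)\not\subseteq M$. Consequently, $C\cap V(M)=\{0\}$, that is, $C(M)=\{0\}$, and therefore $C(M)^{\ast }=\mathrm{Hom}_{\mathbb{Z}_{p^{s}}}(\{0\},\mathbb{Z}_{p^{s}})=0$, as desired.

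There is no real obstacle here; the only subtlety is to be explicit about why $C(M)^{\ast }=0$ follows from $C(M)=\{0\}$ (via the identification $D^{\ast }\cong D$ recalled just above the lemma), and to invoke the weight-versus-support inequality (\ref{supportineq}) rather than re-deriving it. The argument is essentially the standard Singleton-bound-style support argument, adapted to the fact that each coordinate contributes at most $p^{s-1}$ to $w_{L}$ instead of $1$.
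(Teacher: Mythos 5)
Your argument is correct and is essentially identical to the paper's own proof: the same support-counting inequality $p^{s-1}|\mathrm{supp}(c)|\ge w_{L}(c)\ge d$, the same comparison with $|M|<d/p^{s-1}$ to conclude $\mathrm{supp}(c)\not\subseteq M$, and the same final step $C\cap V(M)=\{0\}$ giving $C(M)^{\ast}=0$. Your added remark that $C(M)^{\ast}\cong C(M)$ is a harmless (and slightly more explicit) justification of the last step.
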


By Lemma \ref{singboundlemma}, we have the following bound:

\begin{theorem}
\label{singboundthm}Let $C$ be a code of length $n$ over $\mathbb{Z}_{p^{s}}$
with the minimum Lee weight $d$. Then%
\begin{equation}
\left\lfloor \frac{d-1}{p^{s-1}}\right\rfloor \leq n-rank(C)\text{.}
\label{mdrlee}
\end{equation}
\end{theorem}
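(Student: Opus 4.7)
The plan is to apply the exact sequence of Lemma~\ref{mybasicexactlemma} to a carefully chosen subset $M \subseteq N$, and then read off a bound on $rank(C)$ from the surjectivity of the induced map into $C^*$. Concretely, I would choose $M$ so that $|N - M| = \lfloor (d-1)/p^{s-1} \rfloor$; equivalently $|M| = n - \lfloor (d-1)/p^{s-1} \rfloor$.

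The whole point of this choice is that
$$|N - M| = \left\lfloor \frac{d-1}{p^{s-1}} \right\rfloor \leq \frac{d-1}{p^{s-1}} < \frac{d}{p^{s-1}},$$
so applying Lemma~\ref{singboundlemma} with $N-M$ in place of the generic subset in its statement yields $C(N-M)^* = 0$. Feeding this vanishing into the four-term exact sequence
$$0 \rightarrow C^{\bot}(M) \rightarrow V(M) \overset{f}{\rightarrow} C^* \rightarrow C(N-M)^* \rightarrow 0$$
of Lemma~\ref{mybasicexactlemma} forces the middle map $f : V(M) \rightarrow C^*$ to be surjective.

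To conclude, I would use two standard module-theoretic facts. First, $V(M) \cong \Z_{p^s}^{|M|}$ is free of rank $|M|$, so the images of its free generators under $f$ generate $C^*$; hence the rank of $C^*$ is at most $|M|$. Second, as already noted in the excerpt (the fundamental theorem for finitely generated abelian groups applied in the $\Z_{p^s}$ setting), $C^* \cong C$ as a $\Z_{p^s}$-module, so the rank of $C^*$ equals the rank of $C$. Combining these gives $rank(C) \leq n - \lfloor (d-1)/p^{s-1} \rfloor$, which rearranges to the desired inequality.

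Once Lemmas~\ref{mybasicexactlemma} and~\ref{singboundlemma} are in hand, there is essentially no obstacle; the only subtlety is matching the floor $\lfloor (d-1)/p^{s-1} \rfloor$ to the \emph{strict} inequality $|N-M| < d/p^{s-1}$ demanded by Lemma~\ref{singboundlemma}. That is precisely why the bound appears with $d-1$ in the numerator rather than $d$, and why replacing $p^{s-1}$ by any smaller weight bound would fail to close the argument.
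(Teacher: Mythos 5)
Your proposal is correct, but it reaches the bound by a genuinely different route than the paper. The paper takes $|M|=\left\lfloor (d-1)/p^{s-1}\right\rfloor$, replaces $C$ by $C^{\bot}$ in the exact sequence of Lemma~\ref{mybasicexactlemma}, applies $Hom_{\Z_{p^s}}(\cdot,\Z_{p^s})$, and uses the vanishing $C(M)^{\ast}=0$ to extract the short exact sequence $0\rightarrow C^{\bot}(N-M)\rightarrow C^{\bot}\rightarrow V(M)\rightarrow 0$, which splits because $V(M)$ is projective; this yields $free~rank(C^{\bot})\geq |M|$, and the conclusion then follows from the relation $rank(C)+free~rank(C^{\bot})=n$ in (\ref{ranknull}). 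You instead take the complementary choice $|N-M|=\left\lfloor (d-1)/p^{s-1}\right\rfloor$, apply the sequence directly to $C$, and use $C(N-M)^{\ast}=0$ to get surjectivity of $f:V(M)\rightarrow C^{\ast}$, so that $C^{\ast}\cong C$ is generated by $|M|$ elements and hence $rank(C)\leq |M|$. Your version avoids both the dualization step and the appeal to (\ref{ranknull}), at the cost of invoking the fact that $rank$ as defined in Definition~\ref{rankfreerank} coincides with the minimal number of generators (a standard Nakayama-type consequence of the structure theorem, but worth stating explicitly since the paper defines rank via the invariant-factor decomposition rather than via generators). Both arguments hinge on the same two ingredients, Lemma~\ref{mybasicexactlemma} and Lemma~\ref{singboundlemma}, and your observation about why the strict inequality $|N-M|<d/p^{s-1}$ forces the $d-1$ in the numerator is exactly the right point of care.
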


\begin{proof}
We will follow the steps of Shiromoto in \cite{shiromoto}. In the exact
sequence of Lemma \ref{mybasicexactlemma}, replace $C$ with $C^{\bot }$.
Then the exact sequence transforms into the following one:%
\begin{equation}
0\rightarrow C(M)\overset{inc}{\rightarrow }V(M)\overset{f}{\rightarrow }%
(C^{\bot })^{\ast }\overset{res}{\rightarrow }C^{\bot }(N-M)^{\ast
}\rightarrow 0\text{.}  \label{dualbasicexact}
\end{equation}%
Apply $\ast =Hom_{\mathbb{Z}_{p^{s}}}(\cdot ,\mathbb{Z}_{p^{s}})$ and take
an arbitrary subset $M\subseteq N$ such that%
\begin{equation*}
\left\vert M\right\vert =\left\lfloor \frac{d-1}{p^{s-1}}\right\rfloor \text{%
.}
\end{equation*}%
Since $C(M)^{\ast }=0$ by Lemma \ref{singboundlemma} and $V(M)^{\ast }\cong
V(M)$, the exact sequence (\ref{dualbasicexact}) leads us to the following
short exact sequence:%
\begin{equation}
0\rightarrow C^{\bot }(N-M)\rightarrow C^{\bot }\rightarrow V(M)\rightarrow 0%
\text{.}  \label{shortexact}
\end{equation}%
$V(M)\cong (\mathbb{Z}_{p^{s}})^{\left\vert M\right\vert }$ is a projective
module. Hence (\ref{shortexact}) is a split, that is,%
\begin{equation*}
C^{\bot }\cong C^{\bot }(N-M)\oplus V(M)\text{.}
\end{equation*}%
Therefore%
\begin{equation*}
free~rank(C^{\bot })\geq free~rank(V(M))=\left\vert M\right\vert =\left\lfloor
\frac{d-1}{p^{s-1}}\right\rfloor \text{.}
\end{equation*}%
From (\ref{ranknull}) we have%
\begin{equation*}
n-rank(C)\geq \left\lfloor \frac{d-1}{p^{s-1}}\right\rfloor \text{.}
\end{equation*}
\end{proof}

Codes meeting the bound above are called MLDR (Maximum Lee distance with
respect to Rank) codes.

\section{Kernel and Independence of $\protect\phi_{L} (C)$}

For finite fields and vector spaces the notions of kernel and independence
are strongly related. In this section, we investigate the same notions for
Gray images of linear codes over $\mathbb{Z}_{p^{s}}$, which can be seen as $%
\mathbb{Z}_{p^{s}}$-submodules of $\mathbb{Z}_{p^{s}}^{n}$. The kernel of a
code $C$, denoted by $K(C)$, is defined as the set%
\begin{equation*}
K(C)=\left\{ v\left\vert v\in C,v+C=C\right. \right\} \text{.}
\end{equation*}%
Since $\phi_{L} (C)$ is a code (not necessarily linear), we can define%
\begin{equation*}
K(\phi_{L} (C))=\left\{ \phi_{L} (v)\left\vert v\in C,\phi_{L} (v)+\phi_{L}
(C)=\phi_{L} (C)\right. \right\} \text{.}
\end{equation*}%
In \cite{dougherty}, authors gave some results about $K(\phi_{L} (C))$, $%
\phi_{L} $-independence and modular independence over $\mathbb{Z}_{2^{s}}$.
We have similar results for $\mathbb{Z}_{p^{s}}$.

First we define modular independence.  We say that vectors ${v_1}, {v_2},{v_t}$ are modular independent over
$\Z_{p^s}$ if $\sum \alpha_i {v_i} = {\bf 0}$ then $\alpha_i \in \langle p \rangle$ for all $i$.

\begin{lemma}
\label{independencelemma}Let $G$ be the generating matrix of a linear code
of type $(p^{s})^{\delta _{0}}(p^{s-1})^{\delta _{1}}\cdot \cdot \cdot
(p)^{\delta _{s-1}}$ over $\mathbb{Z}_{p^{s}}$ in standard form. Let $%
v_{i,1},v_{i,2},\cdot \cdot \cdot ,v_{i,\delta _{i}}$ be the vectors of
order $p^{s-i}$. Then the vectors in the set $\left\{ \alpha v_{i,j}|1\leq
\alpha \leq p^{s-i-1}\right\} $ are $\phi_{L} $-independent in $\mathbb{F}%
_{p}^{p^{s-1}n}$.
\end{lemma}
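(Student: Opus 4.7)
The plan is to extract each coefficient of a hypothetical linear dependence one Gray block at a time, by processing the pivot columns of $G$ in lexicographic order. Suppose we are given a relation
\[
\sum_{i,j,\alpha} c_{i,j,\alpha}\, \phi_{L}(\alpha v_{i,j}) = 0 \qquad \text{in } \mathbb{F}_p^{n p^{s-1}},
\]
with $0 \le i \le s-1$, $1 \le j \le \delta_i$, and $1 \le \alpha \le p^{s-i-1}$; the goal is to conclude that every $c_{i,j,\alpha}$ vanishes.

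Since $\phi_{L}$ acts coordinate-wise, the $k$-th Gray block of any $u \in \mathbb{Z}_{p^{s}}^{n}$ is $\phi_{L}(u_{k}) \in \mathbb{F}_p^{p^{s-1}}$. Write $c(i,j) := \delta_{0} + \delta_{1} + \cdots + \delta_{i-1} + j$ for the pivot column of $v_{i,j}$. The standard form \eqref{genmatc} forces the entry of $v_{i',j'}$ in column $c(i,j)$ to equal $p^{i}$ when $(i',j') = (i,j)$, to be $0$ whenever $i' \ge i$ and $(i',j') \ne (i,j)$, and to be a multiple of $p^{i'}$ when $i' < i$. Moreover, because $\alpha \le p^{s-i-1}$ we have $\alpha p^{i} \le p^{s-1}$, and the Gray map gives the unary encoding
\[
\phi_{L}(\alpha p^{i}) = (\underbrace{1,1,\ldots,1}_{\alpha p^{i}},\, 0,\, \ldots,\, 0) \in \mathbb{F}_p^{p^{s-1}}.
\]

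Next I would induct on the pivot pair $(i,j)$ ordered lexicographically as $(0,1) < \cdots < (0,\delta_{0}) < (1,1) < \cdots < (s-1, \delta_{s-1})$. At the step for $(i,j)$, all coefficients $c_{i',j',\alpha}$ with $(i',j')$ earlier in the order are already zero by the inductive hypothesis, while the rows $v_{i',j'}$ with $i' \ge i$ and $(i',j') \ne (i,j)$ contribute nothing in column $c(i,j)$ by the description of $G$ above. Restricting the relation to the Gray block of column $c(i,j)$ therefore collapses to
\[
\sum_{\alpha=1}^{p^{s-i-1}} c_{i,j,\alpha}\, \phi_{L}(\alpha p^{i}) = 0,
\]
and the induction closes as soon as these $p^{s-i-1}$ vectors are linearly independent over $\mathbb{F}_p$.

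This last point is the main (and really the only) technical step. I would verify it by taking consecutive differences $\phi_{L}(\alpha p^{i}) - \phi_{L}((\alpha-1) p^{i})$, which, by the unary description above, are the indicator vectors of the disjoint coordinate windows $\{(\alpha-1)p^{i}+1, \ldots, \alpha p^{i}\}$ for $\alpha = 1, \ldots, p^{s-i-1}$; these windows exactly tile $\{1, \ldots, p^{s-1}\}$. Pairwise disjointness of supports forces $\mathbb{F}_p$-independence of the differences, and triangularly of the original $\phi_{L}(\alpha p^{i})$, completing the induction and proving $\phi_{L}$-independence.
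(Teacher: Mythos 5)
Your proof is correct and follows essentially the same route as the paper's: both arguments reduce to the pivot columns of the standard-form generator matrix, where the relevant entries are $\alpha p^{i}\leq p^{s-1}$ and the Gray images $\phi_{L}(1),\ldots,\phi_{L}(p^{s-1})$ form a triangular (unary) system over $\mathbb{F}_p$. Your version merely makes explicit, via the induction on lexicographically ordered pivot pairs, why the contributions of the other rows to each pivot block vanish — a detail the paper's proof leaves implicit.
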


\begin{proof}
Let $G$ be the generator matrix of the code as given in \ref{genmatc}.
The Gray images of $1,2,\cdot \cdot \cdot ,p^{s-1}$ form an upper triangular
matrix and so the Gray image of the vectors in the first $\delta _{0}$
coordinates are linearly independent. All initial nonzero coordinates of
submatrices $p^{i}I_{\delta _{i}}$ form an uppertriangular matrix and their
entries are all less than or equal to $p^{s-1}$. Therefore the other cases
of the form $p^{i}I_{\delta _{i}}$ form submatrix of the above mentioned
upper triangular matrix. Hence they are also linearly independent.
\end{proof}

\begin{theorem}
\label{independencetheorem}Let $v_{1},v_{2},\cdot \cdot \cdot ,v_{k}$ be
modular independent vectors in $\mathbb{Z}_{p^{s}}^{n}$. Then there exist
modular independent vectors $w_{1},w_{2},\cdot \cdot \cdot ,w_{k}$ which are
$\phi_{L} $-independent in $\mathbb{F}_{p}^{p^{s-1}n}$ such that $%
\left\langle v_{1},v_{2},\cdot \cdot \cdot ,v_{k}\right\rangle =\left\langle
w_{1},w_{2},\cdot \cdot \cdot ,w_{k}\right\rangle $.
\end{theorem}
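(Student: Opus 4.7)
The plan is to replace $v_1, \ldots, v_k$ with a standard-form generating set for $C := \langle v_1, \ldots, v_k \rangle$ and then invoke Lemma~\ref{independencelemma}. The first step is to check that $k = \mathrm{rank}(C)$, so that the standard-form reduction returns exactly $k$ vectors. On one hand, the structure theorem together with (\ref{submoduleisom}) gives $C/pC \cong \mathbb{F}_p^{\mathrm{rank}(C)}$, so $\mathrm{rank}(C)$ is the minimal number of generators of $C$ and hence $\mathrm{rank}(C) \le k$. On the other hand, modular independence of the $v_i$ implies that their images $\bar v_i \in C/pC$ are $\mathbb{F}_p$-linearly independent: any relation $\sum c_i \bar v_i = 0$ lifts to $\sum \tilde c_i v_i = p \sum d_j v_j$ in $\mathbb{Z}_{p^s}^n$, and applying modular independence to $\sum (\tilde c_i - p d_i) v_i = 0$ forces $\tilde c_i \in \langle p \rangle$, i.e.\ $c_i = 0$ in $\mathbb{F}_p$. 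Hence $k = \mathrm{rank}(C)$.

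Next I would put a generator matrix of $C$ into the standard form (\ref{genmatc}) by elementary row operations over $\mathbb{Z}_{p^s}$ together with a coordinate permutation $\sigma$; since $\sum \delta_i = \mathrm{rank}(C) = k$, this yields exactly $k$ rows, which I label $w'_{i,j}$ for $0 \le i \le s-1$ and $1 \le j \le \delta_i$. Setting $w_{i,j} := \sigma^{-1}(w'_{i,j})$ gives $k$ vectors in the original coordinates with $\langle w_{i,j} \rangle = C = \langle v_1, \ldots, v_k \rangle$. A coordinate permutation on $\mathbb{Z}_{p^s}^n$ induces only a block permutation of the Gray image in $\mathbb{F}_p^{p^{s-1}n}$, so both modular independence and $\phi_L$-independence transfer between $\{w_{i,j}\}$ and $\{w'_{i,j}\}$; it therefore suffices to verify these two properties for the standard-form rows $w'_{i,j}$.

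The $\phi_L$-independence of $\{w'_{i,j}\}$ is immediate from Lemma~\ref{independencelemma}, since this set sits inside the $\phi_L$-independent family $\{\alpha w'_{i,j} : 1 \le \alpha \le p^{s-i-1}\}$ (take $\alpha = 1$). For modular independence I would induct on the block index $i$: the pivot column of $w'_{0,j}$ carries a single $1$ in row $(0,j)$ and zeros elsewhere, so a hypothetical relation $\sum \alpha_{i,j} w'_{i,j} = 0$ immediately forces $\alpha_{0,j} = 0$; assuming $\alpha_{i',j'} \in \langle p^{s-i'} \rangle$ for all $i' < i$, the contribution of $w'_{i',j'}$ to the pivot column of $w'_{i,j}$ is $\alpha_{i',j'} \cdot p^{i'} (A_{i',i})_{j',j} \in \langle p^s \rangle$ and so vanishes, leaving $p^i \alpha_{i,j} \equiv 0 \pmod{p^s}$ and therefore $\alpha_{i,j} \in \langle p^{s-i} \rangle \subseteq \langle p \rangle$. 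The main obstacle is the opening count: without tying $k$ to the invariant $\mathrm{rank}(C)$, the standard-form reduction could return a different number of rows and the entire scheme would collapse; the subsequent pivot-column bookkeeping and the appeal to Lemma~\ref{independencelemma} are then routine.
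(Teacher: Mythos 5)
Your proposal is correct and follows essentially the same route as the paper: reduce the generating set of $\langle v_1,\dots,v_k\rangle$ to standard form up to a coordinate permutation and then invoke Lemma~\ref{independencelemma}. The only difference is that the paper outsources the standard-form reduction (and the preservation of modular independence and the count $k=\mathrm{rank}(C)$) to the cited result of Park in \cite{modipzmpark}, whereas you verify these facts directly; your verifications are sound.
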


\begin{proof}
Any set of modular independent vectors over $\mathbb{Z}_{p^{s}}$ are
permutationally equivalent to a set of vectors that form a generator matrix
in standard form as shown in \cite{modipzmpark}. Therefore by Lemma \ref%
{independencelemma} these vectors are $\phi_{L} $-independent.
\end{proof}

The following proposition gives a restriction to the order of elements whose
Gray images belong to $K(\phi_{L} (C))$.

\begin{proposition}
\label{psquareprop}Let $C$ be a linear code over $\mathbb{Z}_{p^{s}}$. If $%
v\in C$ has order greater than $p^{2}$ then $K(\phi_{L} (C))$ does not
contain $\phi_{L} (v)$.
\end{proposition}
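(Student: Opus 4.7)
The plan is to exhibit a specific $u\in C$ for which $\phi_{L}(v)+\phi_{L}(u)$ lies outside $\phi_{L}(\Z_{p^{s}}^{n})$; since $\phi_{L}(C)\subseteq \phi_{L}(\Z_{p^{s}}^{n})$, this already forces $\phi_{L}(v)\notin K(\phi_{L}(C))$. The additive order of $v\in \Z_{p^{s}}^{n}$ equals the maximum of the additive orders of its coordinates, so some entry $v_{i}$ has order strictly greater than $p^{2}$, and all the analysis will take place in the $i$-th Gray image block. Writing $v_{i}=qp^{s-1}+r$ with $0\leq q\leq p-1$ and $0\leq r\leq p^{s-1}-1$, the block $\phi_{L}(v_{i})\in \mathbb{F}_{p}^{p^{s-1}}$ consists of $r$ leading entries of value $q+1$ followed by $p^{s-1}-r$ trailing entries of value $q$ (mod $p$). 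Since the order of $v_{i}$ exceeds $p$, the element $v_{i}$ is not a multiple of $p^{s-1}$, so $r\neq 0$ and both runs are nonempty.

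For odd $p$ I would take $u=v$, so that $\phi_{L}(v)+\phi_{L}(u)=2\phi_{L}(v)$. In the $i$-th block the two distinct values become $2q$ and $2q+2$, whose difference is $2\pmod{p}$. But every nonconstant element of $\phi_{L}(\Z_{p^{s}})$ has its two values differing by exactly $1\pmod{p}$, and since $p\geq 3$ gives $2\not\equiv 0,1\pmod{p}$, the doubled block cannot be a Gray image of any element of $\Z_{p^{s}}$. Hence $2\phi_{L}(v)\notin \phi_{L}(\Z_{p^{s}}^{n})$, as required.

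For $p=2$ the above collapses because $2\phi_{L}(v)=0$, so I would instead take $u=2v$, which lies in $C$ because $C$ is a $\Z_{2^{s}}$-module. The members of $\phi_{L}(\Z_{2^{s}})\subset \mathbb{F}_{2}^{2^{s-1}}$ are precisely the vectors with at most one ``switch'' between a run of $0$'s and a run of $1$'s. The block $\phi_{L}(v_{i})$ switches at position $r$, while $\phi_{L}(2v_{i})$ switches at $r'=2r$ if $r<2^{s-2}$ and at $r'=2r-2^{s-1}$ if $r>2^{s-2}$. The hypothesis that $v_{i}$ has order greater than $4$ is equivalent to $r\notin\{0,2^{s-2}\}$, under which both $r$ and $r'$ lie strictly between $0$ and $2^{s-1}$ and satisfy $r\neq r'$. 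Since $\mathbb{F}_{2}$-addition puts a switch at every position where exactly one summand switches, the sum block acquires two distinct switches and therefore cannot belong to $\phi_{L}(\Z_{2^{s}})$.

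The main obstacle is the case $p=2$: the ``double $v$'' shortcut vanishes in characteristic two, so I must replace it with $u=2v$ and track how the single-switch structure of Gray images interacts with $\mathbb{F}_{2}$-addition. The precise threshold ``order $>p^{2}$'' in the statement is exactly what rules out $r\in\{0,2^{s-2}\}$, the two residues at which the switch positions of $\phi_{L}(v_{i})$ and $\phi_{L}(2v_{i})$ would either coincide or one of them would disappear, and it is this bookkeeping that I expect to consume the bulk of the write-up.
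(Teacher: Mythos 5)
Your proof is correct, and it is a genuinely different (and in part stronger) argument than the paper's. The paper uses the single witness $u=pv$ for every $p$: it picks a coordinate $v_i$ of order greater than $p^{2}$, splits into three cases according to which of the intervals $(0,p^{s-1})$, $(p^{s-1},p^{s}-p^{s-1})$, $(p^{s}-p^{s-1},p^{s})$ contains $v_i$, and shows that $\phi_{L}(v_i)+\phi_{L}(pv_i)$ has three maximal constant runs with values $q+2$, $q+1$, $q$, which no Gray image can have. Your $p=2$ branch is exactly this computation phrased in terms of switches (for $p=2$ the witness $2v$ \emph{is} $pv$, and ``two switches'' is the same obstruction as ``three runs''), but organized more cleanly, with the role of the hypothesis $ord(v_i)>4$ isolated as $r\notin\{0,2^{s-2}\}$ rather than spread over three cases. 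Your odd-$p$ branch, with witness $u=v$ and the invariant ``leading run value minus trailing run value $\equiv 1 \pmod p$,'' has no counterpart in the paper: it needs no case analysis at all and, more importantly, it only uses $ord(v_i)>p$ (i.e.\ $r\neq 0$), so for odd $p$ it proves the strictly stronger statement that no codeword of order exceeding $p$ has its image in the kernel. Combined with Theorem \ref{kernelsubcodethm} this would pin down $K(\phi_{L}(C))$ exactly as the image of the order-$p$ subcode when $p$ is odd, sharpening Lemma \ref{lambdavlem} and Theorem \ref{kerdimrestthm} in that case; you should say this explicitly, since it is more than the proposition claims. One small point to tighten: for $p=3$ the unordered pair of values $\{2q,2q+2\}=\{2q,2q-1\}$ \emph{does} consist of consecutive residues, so the phrase ``the two values differ by $2$, not $1$'' must be read as the oriented statement that the leading value minus the trailing value is $2\not\equiv 1\pmod p$ (and $\not\equiv 0$ for odd $p$); your leading/trailing setup supports this, but the orientation should be invoked explicitly so the $p=3$ case is not left to an unsigned reading.
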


\begin{proof}
Since $ord(v)>p^{2}$, $v$ has a number $i$ as its coordinate with $%
ord(i)>p^{2}$. We have the following three cases for $i\in \mathbb{Z}%
_{p^{s}} $ with $ord(i)>p^{2}$:

\begin{description}
\item[(i)] If $0<i<p^{s-1}$ then $ord(i)=p^{k}$, $k>2$, since $%
ord(i)|\left\vert \mathbb{Z}_{p^{s}}\right\vert =p^{s}$. That means $%
i=p^{s-k}u_{i}$, where $(u_{i},p^{s})=1$, i.e., $(u_{i},p)=1$. Since $%
0<i<p^{s-1}$
\begin{equation*}
\phi_{L} (i)=\overline{1}_{i}\overline{0}_{P^{s-1}-i}\text{,}
\end{equation*}%
and since $s-k\leq s-2$, we have $pi=p^{s-k+1}u_{i}<p^{s}$. We know that $%
i\neq p^{s-1}u_{j}$ or $i\neq p^{s-2}u_{j}$ for any $u_{j}$ such that $%
(u_{j},p)=1$. So by using division algorithm we can write%
\begin{equation*}
\begin{array}{ll}
i=qp^{s-2}+r^{\prime }\text{,} & 0<r^{\prime }<p^{s-2}\text{,} \\
pi=qp^{s-1}+r\text{,} & 0<r=pr^{\prime }<p^{s-1}\text{.}%
\end{array}%
\end{equation*}%
Without loss of generality assume that $i>r$. Then,%
\begin{eqnarray*}
\phi_{L} (i)+\phi_{L} (pi) &=&\overline{1}_{i}\overline{0}_{P^{s-1}-i}+%
\overline{q}_{p^{s-1}}+\overline{1}_{r}\overline{0}_{P^{s-1}-r} \\
&=&\overline{q+2}_{r}\overline{q+1}_{i-r}\overline{q}_{p^{s-1}-i}\notin
\phi_{L} (C)\text{,}
\end{eqnarray*}%
since $r\neq 0$, $r-i\neq 0$ and $p^{s-1}-i\neq 0$. Now assume $i=r$. Then,%
\begin{eqnarray*}
\phi_{L} (i)+\phi_{L} (pi) &=&\overline{1}_{i}\overline{0}_{P^{s-1}-i}+%
\overline{q}_{p^{s-1}}+\overline{1}_{i}\overline{0}_{P^{s-1}-i} \\
&=&\overline{q+2}_{i}\overline{q}_{p^{s-1}-i}\notin \phi_{L} (C)\text{,}
\end{eqnarray*}%
since $i\neq 0$ and $p^{s-1}-i\neq 0$.

\item[(ii)] If $p^{s-1}<i<p^{s}-p^{s-1}$ then $mp^{s-1}<i<(m+1)p^{s-1}$,
where $m\in \left\{ 1,2,3,\cdot \cdot \cdot ,p-2\right\} $. Since $%
ord(i)>p^{2}$, $i\neq p^{s-1}u_{j}$ or $i\neq p^{s-2}u_{j}$ for any $%
u_{j}\in \left\{ 1,2,3,\cdot \cdot \cdot ,p-2,p-1\right\} $. Let
\begin{equation*}
\begin{array}{ll}
i=mp^{s-1}+r\text{,} & 0<r<p^{s-1}\text{,} \\
r=qp^{s-2}+r^{\prime }\text{,} & 0<r^{\prime }<p^{s-2}\text{.}%
\end{array}%
\end{equation*}%
So
\begin{equation*}
pi=(mp^{s-1}+r)p=pr=qp^{s-1}+pr^{\prime }\text{.}
\end{equation*}%
Without loss of generality assume that $r>pr^{\prime }$. Then,%
\begin{eqnarray*}
\phi_{L} (i)+\phi_{L} (pi) &=&\overline{1}_{r}\overline{0}_{p^{s-1}-r}+%
\overline{m}_{p^{s-1}}+\overline{q}_{p^{s-1}}+\overline{1}_{pr^{\prime }}%
\overline{0}_{p^{s-1}-pr^{\prime }} \\
&=&\overline{q+m+2}_{pr^{\prime }}\overline{q+m+1}_{r-pr^{\prime }}\overline{%
q+m}_{p^{s-1}-r}\notin \phi_{L} (C)\text{,}
\end{eqnarray*}%
since $0<pr^{\prime }<p^{s-1}$, $r-pr^{\prime }\neq 0$ and $p^{s-1}-r\neq 0$.

\item[(iii)] If $p^{s}-p^{s-1}<i<p^{s}$ then $0<-i<p^{s-1}$. So $\phi_{L}
(-i)+\phi_{L} (-pi)\notin \phi_{L} (C)$ as we proved in the first case. We
see that for each $v\in \mathbb{Z}_{p^{s}}^{n}$ we have either $\phi_{L}
(v)+\phi_{L} (pv)\notin \phi_{L} (C)$ or $\phi_{L} (-v)+\phi_{L} (-pv)\notin
\phi_{L} (C)$. Hence either $\phi_{L} (v)+\phi_{L} (C)\neq \phi_{L} (C)$ or $%
\phi_{L} (-v)+\phi_{L} (C)\neq \phi_{L} (C)$ when $ord(v)>p^{2}$.
\end{description}
\end{proof}

So the Gray image of the code, which is generated by all vectors of $C$ with
order less than or equal to $p^{2}$ should include $K(\phi_{L} (C))$. Then
we have the following corollary and lemmas, which generalize the results  in \cite{dougherty}:

\begin{corollary}
\label{kernelinccor}Let $C$ be a linear code over $\mathbb{Z}_{p^{s}}$\ with
generator matrix of the form (\ref{genmatc}).Then $K(\phi_{L} (C))$ is
contained in the Gray image of the code generated by the matrix:%
\begin{equation}
\left[
\begin{array}{ccccccc}
p^{s-2}I_{\delta _{0}} & p^{s-2}A_{0,1} & p^{s-2}A_{0,2} & p^{s-2}A_{0,3} &
\cdot \cdot \cdot & \cdot \cdot \cdot & p^{s-2}A_{0,s} \\
0 & p^{s-2}I_{\delta _{1}} & p^{s-2}A_{1,2} & p^{s-2}A_{1,3} & \cdot \cdot
\cdot & \cdot \cdot \cdot & p^{s-2}A_{1,s} \\
0 & 0 & p^{s-2}I_{\delta _{2}} & p^{s-2}A_{2,3} & \cdot \cdot \cdot & \cdot
\cdot \cdot & p^{s-2}A_{2,s} \\
\cdot \cdot \cdot & \cdot \cdot \cdot & 0 & \cdot \cdot \cdot & \cdot \cdot
\cdot & \cdot \cdot \cdot & \cdot \cdot \cdot \\
\cdot \cdot \cdot & \cdot \cdot \cdot & \cdot \cdot \cdot & \cdot \cdot \cdot
& \cdot \cdot \cdot & \cdot \cdot \cdot & \cdot \cdot \cdot \\
0 & 0 & 0 & \cdot \cdot \cdot & p^{s-2}I_{\delta _{s-2}} & p^{s-2}A_{s-2,s-1}
& p^{s-2}A_{s-2,s} \\
0 & 0 & 0 & \cdot \cdot \cdot & 0 & p^{s-1}I_{\delta _{s-1}} &
p^{s-1}A_{s-1,s}%
\end{array}%
\right] \text{.}  \label{genmatinc}
\end{equation}
\end{corollary}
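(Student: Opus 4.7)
The corollary is essentially a structural repackaging of Proposition~\ref{psquareprop}. The plan is as follows. Fix $\phi_L(v) \in K(\phi_L(C))$, so in particular $v \in C$. By Proposition~\ref{psquareprop} the order of $v$ in $\mathbb{Z}_{p^s}^n$ must divide $p^2$, i.e., $p^2 v = 0$. What must be shown is that every such $v$ lies in the $\mathbb{Z}_{p^s}$-row span of the matrix~(\ref{genmatinc}); the conclusion for $\phi_L(v)$ then follows by applying the Gray map.

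Write $v$ as a $\mathbb{Z}_{p^s}$-combination of the rows of $G$ from~(\ref{genmatc}),
\[
v \;=\; \sum_{i=0}^{s-1}\sum_{j=1}^{\delta_i} \alpha_{i,j}\, g_{i,j},
\]
where $g_{i,j}$ denotes the $j$-th row of the $i$-th block (so $g_{i,j}$ has leading entry $p^i$ in the identity block $p^i I_{\delta_i}$). The key observation is that the upper-triangular standard form lets us read off the coefficients block by block: restricting $v$ to the first $\delta_0$ coordinates recovers exactly $(\alpha_{0,1},\dots,\alpha_{0,\delta_0})$, so $p^2 v = 0$ forces $p^{s-2}\mid \alpha_{0,j}$. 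Restricting then to the next $\delta_1$ coordinates and using what we just established, the condition $p^2 v=0$ forces $p^{s-3}\mid \alpha_{1,j}$; proceeding inductively we obtain
\[
p^{s-i-2}\mid \alpha_{i,j} \quad \text{for every } 0\le i\le s-2,
\]
with no constraint coming from the bottom block $i=s-1$.

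With these divisibilities in hand, write $\alpha_{i,j}=p^{s-i-2}\beta_{i,j}$ for $i\le s-2$. Then
\[
\alpha_{i,j}\, g_{i,j} \;=\; p^{s-i-2}\beta_{i,j}\cdot p^{i}\tilde g_{i,j} \;=\; \beta_{i,j}\cdot p^{s-2}\tilde g_{i,j},
\]
where $\tilde g_{i,j} = g_{i,j}/p^i$ is the underlying unit row; this is exactly $\beta_{i,j}$ times the corresponding row of~(\ref{genmatinc}). For $i=s-1$ the rows of~(\ref{genmatinc}) coincide with those of~$G$, so $\alpha_{s-1,j} g_{s-1,j}$ is trivially in its row span. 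Adding everything up exhibits $v$ as a $\mathbb{Z}_{p^s}$-combination of rows of~(\ref{genmatinc}), so $\phi_L(v)$ lies in the Gray image of the code generated by that matrix, which is what we had to prove.

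The only subtle point is justifying the inductive step for the divisibilities: one must check that the contributions of the higher blocks $\alpha_{0,\cdot}A_{0,k},\ \alpha_{1,\cdot}A_{1,k},\dots$ to the $k$-th block of coordinates do not interfere with the conclusion at stage $k$. This is automatic because by that point the earlier $\alpha_{i,\cdot}$ are already known to be divisible by $p^{s-i-2}$, so their contributions are annihilated by an additional factor of $p^{2}$, leaving only the diagonal term $p^{k}\alpha_{k,j}$ to yield the desired constraint. No other step requires serious work.
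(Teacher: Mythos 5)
Your argument is correct and follows the same route the paper intends: the corollary is deduced from Proposition~\ref{psquareprop} by taking the contrapositive (membership in $K(\phi_{L}(C))$ forces $p^{2}v=0$) and then identifying the $p^{2}$-torsion subcode of $C$ with the row span of~(\ref{genmatinc}). The paper states this corollary without proof, relying on the one-sentence remark preceding it; your block-by-block divisibility computation ($p^{s-i-2}\mid\alpha_{i,j}$, read off from the standard form) simply makes explicit the torsion identification that the paper takes for granted, and it is carried out correctly.
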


\begin{lemma}
\label{sumlemma}Let $C$ be a linear code over $%
\mathbb{Z}
_{p^{s}}$ and $v,w\in C$. Then we have%
\begin{equation*}
\phi_{L} (p^{s-1}v+w)=\phi_{L} (p^{s-1}v)+\phi_{L} (w)
\end{equation*}%
for each $v,w\in C$.
\end{lemma}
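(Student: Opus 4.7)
The plan is to exploit the coordinate-wise nature of both the Gray map and the addition. Since $\phi_L$ is applied entry-by-entry when extended to $\mathbb{Z}_{p^s}^n$, it suffices to prove the identity for a single pair of ring elements $v,w \in \mathbb{Z}_{p^s}$. The key observation is that $p^{s-1}v$ is always a multiple of $p^{s-1}$, so its Gray image is a constant vector, which should make addition on the Gray side behave very simply.

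First I would reduce $p^{s-1}v$ modulo $p^s$: setting $k := v \bmod p$, we have $p^{s-1}v \equiv k\,p^{s-1} \pmod{p^s}$, and from the explicit table in Section~2 the Gray image of $k\,p^{s-1}$ is the constant vector $\overline{k} = (k,k,\ldots,k) \in \mathbb{F}_p^{p^{s-1}}$. Next I would use the division-algorithm description of $\phi_L$: writing $w = qp^{s-1}+r$ with $0 \le q \le p-1$ and $0 \le r < p^{s-1}$, we have $\phi_L(w) = \overline{q} + \phi_L(r)$.

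Now I would compute $p^{s-1}v + w$ directly. Modulo $p^s$ this equals $(k+q)p^{s-1} + r$. If $k+q < p$, this is already in the normal form $q'p^{s-1}+r'$, so its Gray image is $\overline{k+q} + \phi_L(r) = \overline{k} + \overline{q} + \phi_L(r) = \phi_L(p^{s-1}v) + \phi_L(w)$. If $k+q \ge p$, reducing modulo $p^s$ turns the leading coefficient into $k+q-p$, but since the constant vectors live in $\mathbb{F}_p^{p^{s-1}}$ we have $\overline{k+q-p} = \overline{k+q} = \overline{k}+\overline{q}$ in $\mathbb{F}_p^{p^{s-1}}$, and the same conclusion holds.

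The only point requiring care is this case-split on whether $k+q$ overflows $p$; the main thing to check is that the reduction modulo $p^s$ on the ring side matches exactly the reduction modulo $p$ of each coordinate on the image side, so that no stray corrections appear. Once this bookkeeping is in hand, the single-coordinate identity extends immediately to vectors in $\mathbb{Z}_{p^s}^n$ and hence in particular to all $v,w \in C$.
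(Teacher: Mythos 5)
Your proof is correct and follows essentially the same route as the paper's: reduce to a single coordinate, write $p^{s-1}v$ as a constant Gray vector $\overline{k}$ with $k = v \bmod p$, decompose $w = qp^{s-1}+r$ so that $\phi_L(w) = \overline{q}+\phi_L(r)$, and observe that the quotients add. Your explicit treatment of the overflow case $k+q\geq p$ is actually slightly more careful than the paper's, which silently absorbs that reduction into the notation $\overline{r_v+q_w}$.
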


\begin{proof}
Let $v_{i},w_{i}\in
\mathbb{Z}
_{p^{s}}$ be the $i^{th}$ coordinates of $v,w$ respectively. Then by
division algorithm we can write%
\begin{equation*}
\begin{array}{lll}
w_{i}=q_{w}p^{s-1}+r_{w}\text{,} & 0\leq q_{w}\leq p-1\text{,} & 0\leq
r_{w}<p^{s-1}\text{,} \\
v_{i}=q_{v}p^{s-1}+r_{v}\text{,} & 0\leq q_{v}<p^{s-1}\text{,} & 0\leq
r_{v}<p\text{.}%
\end{array}%
\end{equation*}%
So $p^{s-1}v=p^{s-1}r_{v}$, where $0\leq p^{s-1}r_{v}<p^{s}$. Therefore%
\begin{eqnarray*}
\phi_{L} (p^{s-1}v_{i}+w_{i}) &=&\phi_{L}
(p^{s-1}r_{v}+q_{w}p^{s-1}+r_{w})=\phi_{L} (p^{s-1}(r_{v}+q_{w})+r_{w}) \\
&=&\overline{r_{v}+q_{w}}_{p^{s-1}}+\overline{1}_{r_{w}}\overline{0}%
_{p^{s-1}-r_{w}}=\overline{r_{v}}_{p^{s-1}}+\overline{q_{w}}_{p^{s-1}}+%
\overline{1}_{r_{w}}\overline{0}_{p^{s-1}-r_{w}} \\
&=&\phi_{L} (p^{s-1}r_{v})+\phi_{L} (q_{w}p^{s-1}+r_{w})=\phi_{L}
(p^{s-1}v_{i})+\phi_{L} (w_{i})\text{.}
\end{eqnarray*}%
Applying this method coordinate-wise, the result follows.
\end{proof}

\begin{theorem}
\label{kernelsubcodethm}Let $C$ be a linear code over $%
\mathbb{Z}
_{p^{s}}$ with the generator matrix of the form (\ref{genmatc}). Then the
Gray image of the code $C'$ generated by
\begin{equation}
\left[
\begin{array}{ccccccc}
p^{s-1}I_{\delta _{0}} & p^{s-1}A_{0,1} & p^{s-1}A_{0,2} & p^{s-1}A_{0,3} &
\cdot \cdot \cdot & \cdot \cdot \cdot & p^{s-1}A_{0,s} \\
0 & p^{s-1}I_{\delta _{1}} & p^{s-1}A_{1,2} & p^{s-1}A_{1,3} & \cdot \cdot
\cdot & \cdot \cdot \cdot & p^{s-1}A_{1,s} \\
0 & 0 & p^{s-1}I_{\delta _{2}} & p^{s-1}A_{2,3} & \cdot \cdot \cdot & \cdot
\cdot \cdot & p^{s-1}A_{2,s} \\
\cdot \cdot \cdot & \cdot \cdot \cdot & 0 & \cdot \cdot \cdot & \cdot \cdot
\cdot & \cdot \cdot \cdot & \cdot \cdot \cdot \\
\cdot \cdot \cdot & \cdot \cdot \cdot & \cdot \cdot \cdot & \cdot \cdot \cdot
& \cdot \cdot \cdot & \cdot \cdot \cdot & \cdot \cdot \cdot \\
0 & 0 & 0 & \cdot \cdot \cdot & p^{s-1}I_{\delta _{s-2}} & p^{s-1}A_{s-2,s-1}
& p^{s-1}A_{s-2,s} \\
0 & 0 & 0 & \cdot \cdot \cdot & 0 & p^{s-1}I_{\delta _{s-1}} &
p^{s-1}A_{s-1,s}%
\end{array}%
\right]  \label{genmatcdouble}
\end{equation}%
is a linear subcode of $K(\phi_{L} (C))$.
\end{theorem}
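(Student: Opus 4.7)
The plan is to prove the two conditions implicit in the claim: first, that $\phi_L(C')$ is linear over $\mathbb{F}_p$; second, that each element of $\phi_L(C')$ lies in $K(\phi_L(C))$.

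The crucial structural observation is that every entry of the matrix in (\ref{genmatcdouble}) is divisible by $p^{s-1}$ (the $i$-th block of rows is just $p^{s-1-i}$ times the $i$-th block of rows of (\ref{genmatc}), so each block entry $p^i \cdot(\text{stuff})$ becomes $p^{s-1}\cdot(\text{stuff})$). Consequently every $v'\in C'$ can be written as $v' = p^{s-1} u$ for some $u\in\mathbb{Z}_{p^s}^n$. I will also note that $C' \subseteq C$, because each row of (\ref{genmatcdouble}) is a $\mathbb{Z}_{p^s}$-scalar multiple of the corresponding row of (\ref{genmatc}), so linear combinations over $\mathbb{Z}_{p^s}$ of the new rows lie in $C$.

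For the kernel containment, I would take arbitrary $v' = p^{s-1} u \in C'$ and arbitrary $c \in C$ and apply Lemma \ref{sumlemma} coordinate-wise to get
\[
\phi_L(v') + \phi_L(c) \;=\; \phi_L(p^{s-1} u) + \phi_L(c) \;=\; \phi_L(p^{s-1} u + c) \;=\; \phi_L(v' + c).
\]
Since $v' \in C' \subseteq C$ and $c \in C$ with $C$ linear, $v' + c \in C$, hence the right-hand side lies in $\phi_L(C)$. This shows $\phi_L(v') + \phi_L(C) \subseteq \phi_L(C)$, and the reverse inclusion follows by replacing $c$ with $c - v' \in C$. Therefore $\phi_L(v') \in K(\phi_L(C))$.

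For linearity, I apply Lemma \ref{sumlemma} once more: given $v_1' = p^{s-1} u_1$ and $v_2' = p^{s-1} u_2$ in $C'$,
\[
\phi_L(v_1') + \phi_L(v_2') \;=\; \phi_L(p^{s-1} u_1 + v_2') \;=\; \phi_L(v_1' + v_2'),
\]
and $v_1' + v_2' \in C'$ since $C'$ is a $\mathbb{Z}_{p^s}$-module. So $\phi_L(C')$ is closed under addition in $\mathbb{F}_p^{p^{s-1}n}$, and closure under $\mathbb{F}_p$-scalar multiplication follows from repeated addition. This finishes the proof. There is no real obstacle once one notices the multiplicative $p^{s-1}$ factor in every element of $C'$; the entire argument reduces to a double application of Lemma \ref{sumlemma} together with the fact that $C'\subseteq C$ is itself a $\mathbb{Z}_{p^s}$-submodule.
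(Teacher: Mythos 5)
Your proof is correct and follows essentially the same route as the paper's: both arguments hinge on the observation that every element of $C'$ has the form $p^{s-1}u$ and on Lemma \ref{sumlemma} to show that translating $\phi_L(C)$ by $\phi_L(p^{s-1}u)$ maps it into itself. You are in fact slightly more complete than the paper, since you also verify explicitly that $\phi_L(C')$ is linear and give a direct argument (via $c\mapsto c-v'$) for the reverse inclusion where the paper appeals to a cardinality count.
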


\begin{proof}
Let $v,w\in C$, then $p^{s-1}v\in C'\subseteq C$. Then $\phi_{L}
(p^{s-1}v)\in \phi_{L} (C')$ and $\phi_{L} (w)\in \phi_{L} (C)$. By Lemma %
\ref{sumlemma}%
\begin{equation*}
\phi_{L} (p^{s-1}v+w)=\phi_{L} (p^{s-1}v)+\phi_{L} (w)\in \phi_{L} (C)\text{,%
}
\end{equation*}%
since $p^{s-1}v,w\in C$. This holds for every $w\in C$, which means $%
\phi_{L} (p^{s-1}v)+\phi_{L} (C)\subseteq \phi_{L} (C)$. Two different
codewords will have different images. Therefore $\phi_{L}
(p^{s-1}v)+\phi_{L} (C)=\phi_{L} (C)$, which tells us that $\phi_{L}
(p^{s-1}v)\in K(\phi_{L} (C))$.
\end{proof}

\begin{lemma}
\label{lambdavlem}Let $C$ be a linear code over $%
\mathbb{Z}
_{p^{s}}$, $\lambda \in
\mathbb{Z}
_{p^{s}}$ and $v\in C$ such that $\phi_{L} (v)\notin K(\phi_{L} (C))$. Then $%
\phi_{L} (\lambda v)\in K(\phi_{L} (C))$ if and only if $ord(\lambda v)=p$.
\end{lemma}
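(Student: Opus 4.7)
The plan is to prove the two implications of the lemma separately, using Lemma \ref{sumlemma}, Corollary \ref{kernelinccor}, and the order-bound supplied by Proposition \ref{psquareprop}.

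\emph{If direction.} Suppose $\text{ord}(\lambda v) = p$. Then $p \cdot \lambda v = 0$ forces each coordinate of $\lambda v$ to have additive order dividing $p$, so $\lambda v \in p^{s-1}\mathbb{Z}_{p^s}^n$. Lemma \ref{sumlemma}, applied coordinate-wise (its proof is purely coordinate-based and does not truly need the indicated vector to lie in $C$), yields $\phi_L(\lambda v + w) = \phi_L(\lambda v) + \phi_L(w)$ for every $w \in C$. Since $\lambda v + w \in C$ by linearity, $\phi_L(\lambda v) + \phi_L(w) \in \phi_L(C)$ for all $w$, giving $\phi_L(\lambda v) + \phi_L(C) \subseteq \phi_L(C)$; a cardinality count upgrades this to equality, so $\phi_L(\lambda v) \in K(\phi_L(C))$.

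\emph{Only if direction.} Assume $\phi_L(\lambda v) \in K(\phi_L(C))$ with $\lambda v \ne 0$. By Corollary \ref{kernelinccor}, $K(\phi_L(C))$ lies inside the Gray image of the $p^2$-torsion subcode $\{c \in C : p^2 c = 0\}$, so $\text{ord}(\lambda v) \in \{p, p^2\}$. To exclude the value $p^2$, fix a coordinate $j$ with $(\lambda v)_j = p^{s-2} u$, $\gcd(u,p)=1$, and write $u = q'p + r'$ with $1 \le r' \le p-1$. The Gray-map formula produces $\phi_L((\lambda v)_j) = \overline{q'+1}_{r'p^{s-2}}\overline{q'}_{p^{s-1}-r'p^{s-2}}$, two adjacent constant segments whose values differ by exactly $1 \pmod p$. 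Since $K(\phi_L(C))$ is an $\mathbb{F}_p$-subspace of $\mathbb{F}_p^{p^{s-1}n}$, we have $2\phi_L(\lambda v) \in K \subseteq \phi_L(\mathbb{Z}_{p^s}^n)$, so the $j$-th block of $2\phi_L(\lambda v)$, namely $\overline{2(q'+1)}_{r'p^{s-2}}\overline{2q'}_{p^{s-1}-r'p^{s-2}}$, must itself be a Gray image. But its two non-empty segment values differ by $2 \pmod p$; every Gray image is either constant or has adjacent segment values differing by $1$, so for odd $p$ this is impossible --- the desired contradiction.

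\emph{Main obstacle.} The leftover case $p = 2$, $\text{ord}(\lambda v) = 4$ is the delicate one. In characteristic $2$ the doubling map $x \mapsto 2x$ is identically zero on $\mathbb{F}_2^{p^{s-1}n}$, so the argument above collapses and a genuinely different approach is required. Here the hypothesis $\phi_L(v) \notin K(\phi_L(C))$ becomes essential: starting from a witness $w_0 \in C$ with $\phi_L(v) + \phi_L(w_0) \notin \phi_L(C)$, one has to construct an explicit $w \in C$ with $\phi_L(\lambda v) + \phi_L(w) \notin \phi_L(C)$ by a coordinate-level case analysis modelled on the proof of Proposition \ref{psquareprop}. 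Making this construction precise --- and in particular identifying the correct $w$ from $w_0$ --- is the step I would expect to take the most care.
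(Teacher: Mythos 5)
Your ``if'' direction is correct and is essentially the paper's (the paper routes it through Theorem \ref{kernelsubcodethm}, which rests on exactly the Lemma \ref{sumlemma} computation you invoke; your remark that the computation only needs the coordinates of $\lambda v$ to lie in $p^{s-1}\mathbb{Z}_{p^s}$ is right and is actually needed, since $\lambda v$ need not lie in $p^{s-1}C$). Your exclusion of $ord(\lambda v)=p^{2}$ for odd $p$ is also correct and takes a genuinely different route: you use that $K(\phi_L(C))$ is an $\mathbb{F}_p$-subspace of $\phi_L(C)$, double $\phi_L(\lambda v)$, and observe that the resulting block $\overline{2q'+2}_{r'p^{s-2}}\overline{2q'}_{p^{s-1}-r'p^{s-2}}$ has two nonempty runs differing by $2$, hence cannot equal any $\overline{q_y+1}_{r_y}\overline{q_y}_{p^{s-1}-r_y}$. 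This is clean and even proves something stronger for odd $p$ (no codeword of order $p^{2}$ has image in the kernel, with no hypothesis on $v$). The paper instead takes the witness to be $v$ itself: since $v\in C$, one computes $\phi_L(\lambda v_i)+\phi_L(v_i)$ at a coordinate of order at least $p^{2}$ and reads off a block with three nonempty runs of values $a+2$, $a+1$, $a$ (which reduce to $a$, $a+1$, $a$ mod $2$), never a single Gray image. That choice $w=v$ is precisely the ``correct $w$'' you say you have not identified.

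The genuine gap is the case you yourself flag: $p=2$ with $ord(\lambda v)=4$. You describe it as requiring ``a genuinely different approach'' and a construction you have not carried out, so as written the proposal proves the lemma only for odd $p$. The difficulty is real, not just an omission of routine detail: when $ord(v)=p^{2}$ and $p=2$, every relevant coordinate of both $v$ and $\lambda v$ reduces to $2^{s-2}$ modulo $2^{s-1}$, so even the paper's witness $w=v$ degenerates --- the sum $\phi_L(\lambda v_i)+\phi_L(v_i)$ becomes the constant block $\overline{q_u+q_v}_{2^{s-1}}$, which \emph{is} a legitimate Gray image, and the single-coordinate argument gives no contradiction. (The paper's displayed computation asserts non-membership at this point without justification for $p=2$.) Settling this subcase genuinely requires the hypothesis $\phi_L(v)\notin K(\phi_L(C))$ together with a global witness $w_0$, e.g.\ via the $\mathbb{Z}_4$-type identity expressing $\phi_L(v)+\phi_L(w)-\phi_L(v+w)$ in terms of a carry term. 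Until that is written out, the proof is incomplete.
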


\begin{proof}
$(\Longrightarrow )$:Suppose that $ord(\lambda v)=p$, then $\phi_{L}
(\lambda v)\in K(\phi_{L} (C))$ by Theorem \ref{kernelsubcodethm}.

$(\Longleftarrow )$Now assume $\phi_{L} (v)\notin K(\phi_{L} (C))$ and $%
\phi_{L} (\lambda v)\in K(\phi_{L} (C))$. We have two cases.

\begin{description}
\item[(i)] If $ord(v)>p^{2}$ and $v=(v_{1},v_{2},\cdot \cdot \cdot ,v_{n})$,
then there exists $v_{i}$, $1\leq i\leq n$, such that $ord(v_{i})>p^{2}$.
Let $ord(v_{i})=p^{k}$ with $k>2$. Then $v_{i}=p^{s-k}u_{i}$, where $u_{i}$
is a unit. By division algorithm, we have%
\begin{equation*}
\begin{array}{lll}
u_{i}=q_{u}p+r_{u}\text{,} & 0\leq q_{u}\leq p^{s-1}-1\text{,} & 0<r_{u}<p%
\text{,} \\
v_{i}=q_{v}p^{s-1}+r_{v}\text{,} & 0\leq q_{v}\leq p-1\text{,} &
0<r_{v}<p^{s-1}\text{,}%
\end{array}%
\end{equation*}%
where $r_{u}\neq 0$, since $u_{i}$ is a unit and $r_{v}\neq 0$, since $%
ord(v_{i})>p^{2}$. If $\phi_{L} (\lambda v_{i})\in K(\phi_{L} (C))$, then by
Proposition \ref{psquareprop} $\lambda =p^{k-2}u_{\lambda }$ or $\lambda
=p^{k-1}u_{\lambda }$, where $u_{\lambda }$ is a unit. For $\lambda
=p^{k-1}u_{\lambda }$ we have $ord(\lambda v_{i})=p$, so $\phi_{L} (\lambda
v_{i})\in K(\phi_{L} (C))$ by Theorem \ref{kernelsubcodethm}. If $\lambda
=p^{k-2}u_{\lambda }$, then $ord(\lambda v_{i})=p^{2}$ and $\lambda
v_{i}=p^{s-2}u_{\lambda }u_{i}=q_{u}u_{\lambda }p^{s-1}+r_{u}u_{\lambda
}p^{s-2}$, where $0<r_{u}u_{\lambda }p^{s-2}<p^{s-1}$. Without loss of
generality assume that $r_{u}u_{\lambda }p^{s-2}<r_{v}$, then we have
\begin{equation*}
\phi_{L} (\lambda v_{i})+\phi_{L} (v_{i})=\overline{(q_{u}+q_{v}+2)}%
_{r_{u}u_{\lambda }p^{s-2}}\overline{(q_{u}+q_{v}+1)}_{r_{v}-r_{u}u_{\lambda
}p^{s-2}}\overline{(q_{u}+q_{v})}_{p^{s-1}-r_{v}}\notin \phi_{L} (C)\text{,}
\end{equation*}%
since $r_{u}u_{\lambda }p^{s-2}\neq 0$, $r_{v}-r_{u}u_{\lambda }p^{s-2}\neq
0 $, $p^{s-1}-r_{v}\neq 0$. If $r_{u}u_{\lambda }p^{s-2}=r_{v}$, then
\begin{equation*}
\phi_{L} (\lambda v_{i})+\phi_{L} (v_{i})=\overline{(q_{u}+q_{v}+2)}_{r_{v}}%
\overline{(q_{u}+q_{v})}_{p^{s-1}-r_{v}}\notin \phi_{L} (C)\text{,}
\end{equation*}%
since $p^{s-1}-r_{v}\neq 0$, $r_{v}\neq 0$.

\item[(ii)] If $ord(v)=p^{2}$ and $v=(v_{1},v_{2},\cdot \cdot \cdot ,v_{n})$%
, then there exists $v_{i}$, $1\leq i\leq n$, such that $ord(v_{i})=p^{2}$.
Then $v_{i}=p^{s-2}u_{i}$, where $u_{i}$ is a unit. By division algorithm,
we have $v_{i}=q_{v}p^{s-1}+r_{v}$, $0\leq q_{v}\leq p-1$, $0<r_{v}<p^{s-1}$%
, since $ord(v_{i})=p^{2}$. If $\phi_{L} (\lambda v_{i})\in K(\phi_{L} (C))$%
, then by Proposition \ref{psquareprop} $ord(\lambda v_{i})=p^{2}$ or $%
ord(\lambda v_{i})=p$. If $ord(\lambda v_{i})=p$, we have $\phi_{L} (\lambda
v_{i})\in K(\phi_{L} (C))$ by Theorem \ref{kernelsubcodethm}. If $%
ord(\lambda v_{i})=p^{2}$ then $\lambda $ is a unit and $\lambda
v_{i}=p^{s-1}q+r$, $0<r<p^{s-1}$, $r\neq 0$, since $ord(\lambda v_{i})=p^{2}$%
. Without loss of generality assume that $r_{v}>r$, then we have%
\begin{equation*}
\phi_{L} (\lambda v_{i})+\phi_{L} (v_{i})=\overline{(q+q_{v}+2)}_{r}%
\overline{(q+q_{v}+1)}_{r_{v}-r}\overline{(q_{u}+q_{v})}_{p^{s-1}-r_{v}}%
\notin \phi_{L} (C)\text{,}
\end{equation*}%
since $r\neq 0$, $r_{v}-r\neq 0$, $p^{s-1}-r_{v}\neq 0$. If $r=r_{v}$, then%
\begin{equation*}
\phi_{L} (\lambda v_{i})+\phi_{L} (v_{i})=\overline{(q_{u}+q_{v}+2)}_{r_{v}}%
\overline{(q_{u}+q_{v})}_{p^{s-1}-r_{v}}\notin \phi_{L} (C)\text{,}
\end{equation*}%
since $p^{s-1}-r_{v}\neq 0$, $r_{v}\neq 0$. In both cases $\phi_{L} (\lambda
v)+\phi_{L} (v)\notin \phi_{L} (C)$, whenever $ord(\lambda v)\neq p$.
\end{description}
\end{proof}

\begin{theorem}
\label{kerdimrestthm}Let $C$ be a linear code over $%
\mathbb{Z}
_{p^{s}}$ of type $\left\{ \delta _{0},\delta _{1},\cdot \cdot \cdot ,\delta
_{s-1}\right\} $. If $m=\dim (K(\phi_{L} (C)))$, then%
\begin{equation*}
m\in \left\{ \sum\limits_{i=0}^{s-1}\delta
_{i},\sum\limits_{i=0}^{s-1}\delta _{i}+1,\sum\limits_{i=0}^{s-1}\delta
_{i}+2,\cdot \cdot \cdot ,\sum\limits_{i=0}^{s-1}\delta _{i}+\delta
_{s-2}-2,\sum\limits_{i=0}^{s-1}\delta _{i}+\delta _{s-2}\right\} .
\end{equation*}
\end{theorem}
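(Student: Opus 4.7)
The plan is to prove the two-sided bound $\sum \delta_i \leq m \leq \sum \delta_i + \delta_{s-2}$ and then to exclude the single value $m = \sum \delta_i + \delta_{s-2} - 1$ when $\delta_{s-2} \geq 2$.

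For the lower bound, I would first identify the code $C'$ of Theorem \ref{kernelsubcodethm} with the $p$-torsion submodule $C[p] := \{v \in C : pv = 0\}$. Every row of the matrix in (\ref{genmatcdouble}) has all entries divisible by $p^{s-1}$ and therefore has order dividing $p$, the upper-triangular shape forces $\mathbb{F}_p$-independence, and conversely any element of $C$ killed by $p$ is a combination of these rows. Because $C[p] \subseteq p^{s-1}\mathbb{Z}_{p^s}^n$, Lemma \ref{sumlemma} makes $\phi_L$ restricted to $C[p]$ an $\mathbb{F}_p$-linear injection, so $\phi_L(C[p])$ is an $\mathbb{F}_p$-subspace of dimension $\sum_{i=0}^{s-1} \delta_i$. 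Theorem \ref{kernelsubcodethm} then gives $m \geq \sum \delta_i$.

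For the upper bound, let $\tilde K := \phi_L^{-1}(K(\phi_L(C))) \cap C$. By Proposition \ref{psquareprop}, $\tilde K \subseteq C[p^2]$, and Lemma \ref{sumlemma} applied with $u\in C[p]$ yields $\phi_L(u+v)=\phi_L(u)+\phi_L(v)\in K(\phi_L(C))$ for every $v\in\tilde K$; thus $\tilde K$ is a union of $C[p]$-cosets and $|\tilde K|=p^{\sum \delta_i}\cdot|\tilde K/C[p]|$. The quotient $C[p^2]/C[p]$ decomposes as $\bigoplus_{i=0}^{s-2}\mathbb{F}_p^{\delta_i}$ via the block structure of (\ref{genmatc}). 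For any block $i<s-2$ and any generator $g_{i,j}$, one has $\mathrm{ord}(g_{i,j})=p^{s-i}>p^{2}$, so $\phi_L(g_{i,j})\notin K(\phi_L(C))$ by Proposition \ref{psquareprop}; Lemma \ref{lambdavlem} then forces any scalar multiple $\lambda g_{i,j}$ lying in $\tilde K$ to actually belong to $C[p]$. A block-wise argument, again using the coset decomposition and Lemma \ref{sumlemma} to peel off one block at a time, extends this to conclude that $\tilde K/C[p]$ projects trivially onto each block $i<s-2$, so $\tilde K/C[p]\hookrightarrow\mathbb{F}_p^{\delta_{s-2}}$ and $m\leq \sum \delta_i+\delta_{s-2}$.

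Excluding the intermediate value is the technical heart of the theorem and is the main obstacle. Here one regards $T:=\tilde K/C[p]$ as an $\mathbb{F}_p$-subspace of $\mathbb{F}_p^{\delta_{s-2}}$ and argues that $T$ cannot be a hyperplane. The plan is by contradiction: assume $\dim T=\delta_{s-2}-1$, pick lifts $w_{1},\ldots,w_{\delta_{s-2}-1}\in\tilde K$ of a basis of $T$, and fix a block-$(s-2)$ generator $g$ with $[g]\notin T$. Exploiting the subgroup property of $K(\phi_L(C))$ together with the partial additivity in Lemma \ref{sumlemma}, one performs a coordinate-wise case analysis in the same spirit as the proofs of Proposition \ref{psquareprop} and Lemma \ref{lambdavlem}: the constraint that $\phi_L(g)+\phi_L(w)\in\phi_L(C)$ for every $w\in C$, which a priori could fail, is in fact forced to hold once the $\delta_{s-2}-1$ hyperplane conditions are imposed, because the only coordinate configurations in which $\phi_L(g)+\phi_L(w)$ escapes $\phi_L(C)$ are precisely the ones ruled out by the codeword relations among $w_{1},\ldots,w_{\delta_{s-2}-1}$. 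Thus $\phi_L(g)\in K(\phi_L(C))$ and $[g]\in T$, contradicting the maximality of the hyperplane. This forces $\dim T \in\{0,1,\ldots,\delta_{s-2}-2\}\cup\{\delta_{s-2}\}$ and completes the proof.
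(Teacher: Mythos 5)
Your lower bound and upper bound follow essentially the same route as the paper: order-$p$ codewords land in the kernel by Theorem \ref{kernelsubcodethm}, codewords of order greater than $p^{2}$ are excluded by Proposition \ref{psquareprop}, and Lemma \ref{lambdavlem} confines the blocks $i<s-2$ to their order-$p$ multiples, squeezing $m$ between $\sum\delta_i$ and $\sum\delta_i+\delta_{s-2}$. (Your ``peel off one block at a time'' step for sums across blocks is asserted rather than proved, since Lemma \ref{sumlemma} only gives additivity when one summand lies in $p^{s-1}\Z_{p^s}^n$, but the paper is no more explicit on this point, so I will not press it.)

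The genuine gap is in the exclusion of the value $\sum\delta_i+\delta_{s-2}-1$, which you yourself flag as ``the technical heart.'' What you offer there is not an argument but a restatement of the goal: you assert that ``the constraint that $\phi_L(g)+\phi_L(w)\in\phi_L(C)$ for every $w\in C$ \dots\ is in fact forced to hold once the $\delta_{s-2}-1$ hyperplane conditions are imposed, because the only coordinate configurations in which $\phi_L(g)+\phi_L(w)$ escapes $\phi_L(C)$ are precisely the ones ruled out by the codeword relations among $w_1,\dots,w_{\delta_{s-2}-1}$.'' No mechanism is given for why membership of $\delta_{s-2}-1$ \emph{specific} elements in the kernel should control the behaviour of $g$ against \emph{every} $w\in C$; the quantifiers simply do not match, and nothing in your setup bridges them. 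Moreover, the case analyses you cite as templates (Proposition \ref{psquareprop}, Lemma \ref{lambdavlem}) are all \emph{negative} arguments: they exhibit a vector that is not of Gray-image shape and conclude non-membership in $\phi_L(C)$. Your step requires the opposite, a \emph{positive} certification that $\phi_L(g)+\phi_L(w)$ lies in $\phi_L(C)$, which demands global knowledge of which vectors are Gray images of codewords; nothing in the cited lemmas supplies that. For comparison, the paper closes this step by a short abstract argument: with $\widetilde{C}$ the code generated by (\ref{genmatmost}), it notes $K(\phi_L(C))\subseteq\phi_L(\widetilde{C})$, assumes $K(\phi_L(C))$ has codimension one there, takes a complementary generator $v_0$, and derives $v_0\in K(\phi_L(\widetilde{C}))\subseteq K(K(\phi_L(C)))\subseteq K(\phi_L(C))$, a contradiction. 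Whatever one thinks of that chain of inclusions, it is a concrete argument of a completely different character from your proposed coordinate-wise analysis, and your version as written cannot be checked or completed without supplying the missing combinatorial core.
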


\begin{proof}
By Theorem \ref{kernelsubcodethm}, the image of any codeword of order $p$ is
in $K(\phi_{L} (C))$. If there is a codeword $v$ of order greater than $p^{2}
$, then $\phi_{L} (v)\notin K(\phi_{L} (C))$. Moreover, if $\phi_{L}
(v)\notin K(\phi_{L} (C))$, then $\phi_{L} (\lambda v)\in K(\phi_{L} (C))$
if and only if $ord(\lambda v)=p$ by Lemma \ref{lambdavlem}. Otherwise $%
\phi_{L} (\lambda v)+\phi_{L} (v)\notin \phi_{L} (C)$. So for $\phi_{L}
(v)\notin K(\phi_{L} (C))$ and $\phi_{L} (\lambda v)\in \phi_{L}
(C")\subseteq $ $K(\phi_{L} (C))$ we have $ord(\lambda v)=p$. This means we
have the Gray images of first $\sum\limits_{i=0}^{s-3}\delta _{i}$ vectors
of (\ref{genmatcdouble}) as generators of $K(\phi_{L} (C))$. Furthermore, we
can show that the contribution of the Gray images of first $%
\sum\limits_{i=0}^{s-3}\delta _{i}$ vectors of (\ref{genmatc}) to $%
K(\phi_{L} (C))$ is restricted to that. To see this, let $v$ be one these
vectors in (\ref{genmatc}). Then $ord(v)>p^{2}$ and $\phi_{L} (v)\notin
K(\phi_{L} (C))$ by Proposition \ref{psquareprop}. For any scalar product of
$v$, say $\lambda v$, then $\phi_{L} (\lambda v)\in K(\phi_{L} (C))$ if and
only if $ord(\lambda v)=p$ by Lemma \ref{lambdavlem}. If $ord(v)=p^{k}$, $k>2
$, $v=u_{v}p^{k}$, this happens only when $\lambda =p^{s-k-1}u_{\lambda }$,
where $u_{\lambda }$ and $u_{v}$ are units. Therefore $\lambda v=p^{s-1}u$,
where $u=u_{v}u_{\lambda } $ is a unit too. This shows that the only
contribution of the Gray image of $v$ to $K(\phi_{L} (C))$ is its scalar
products with the   $p^{s-1}u$ and their linear combinations. Also we know that
the Gray image of the last $\delta _{s-1}$ rows of (\ref{genmatcdouble}) are
generators of $K(\phi_{L} (C))$ by Theorem \ref{kernelsubcodethm}. We don't
know whether each of the Gray images of $\delta _{s-2}$\ remaining vectors
generate $K(\phi_{L} (C))$ certainly. But we know that if their Gray images
are not included in generators of $K(\phi_{L} (C))$, the Gray image of their
scalar products with $pu $, where $u$ is a unit, are all included in $%
K(\phi_{L} (C))$. Hence we can have at least the Gray image of the code
generated by (\ref{genmatcdouble}), and at most the Gray image of the code
generated by%
\begin{equation}
\left[
\begin{array}{ccccccc}
p^{s-1}I_{\delta _{0}} & p^{s-1}A_{0,1} & p^{s-1}A_{0,2} & p^{s-1}A_{0,3} &
\cdot \cdot \cdot & \cdot \cdot \cdot & p^{s-1}A_{0,s} \\
0 & p^{s-1}I_{\delta _{1}} & p^{s-1}A_{1,2} & p^{s-1}A_{1,3} & \cdot \cdot
\cdot & \cdot \cdot \cdot & p^{s-1}A_{1,s} \\
0 & 0 & p^{s-1}I_{\delta _{2}} & p^{s-1}A_{2,3} & \cdot \cdot \cdot & \cdot
\cdot \cdot & p^{s-1}A_{2,s} \\
\cdot \cdot \cdot & \cdot \cdot \cdot & 0 & \cdot \cdot \cdot & \cdot \cdot
\cdot & \cdot \cdot \cdot & \cdot \cdot \cdot \\
\cdot \cdot \cdot & \cdot \cdot \cdot & \cdot \cdot \cdot & \cdot \cdot \cdot
& \cdot \cdot \cdot & \cdot \cdot \cdot & \cdot \cdot \cdot \\
0 & 0 & 0 & \cdot \cdot \cdot & p^{s-2}I_{\delta _{s-2}} & p^{s-2}A_{s-2,s-1}
& p^{s-2}A_{s-2,s} \\
0 & 0 & 0 & \cdot \cdot \cdot & 0 & p^{s-1}I_{\delta _{s-1}} &
p^{s-1}A_{s-1,s}%
\end{array}%
\right]  \label{genmatmost}
\end{equation}%
as $K(\phi_{L} (C))$. Thus we have the following bound for $m$:%
\begin{equation*}
p^{\sum\limits_{i=0}^{s-1}\delta _{i}}\leq p^{m}\leq
p^{\sum\limits_{i=0,i\neq s-2}^{s-1}\delta _{i}}\cdot p^{2\delta _{s-2}}%
\text{,}
\end{equation*}%
which means
\begin{equation*}
\sum\limits_{i=0}^{s-1}\delta _{i}\leq m\leq \sum\limits_{i=0}^{s-1}\delta
_{i}+\delta _{s-2}\text{.}
\end{equation*}%
Let $\widetilde{C}$ be the code generated by matrix (\ref{genmatmost}).
Since $K(\phi_{L} (C))$ is at most $\phi_{L} (\widetilde{C})$, $K(\phi_{L}
(C))\subseteq \phi_{L} (\widetilde{C})$. So,%
\begin{equation*}
K(\phi_{L} (C))=\left\{ c\in \widetilde{C}:\phi_{L} (c)+\phi_{L}
(C)=\phi_{L} (C)\right\} \text{.}
\end{equation*}%
Let $\left\{ v_{0},v_{1},\cdot \cdot \cdot ,v_{k}\right\} $ be the set of
generators of $\phi_{L} (\widetilde{C})$, namely $\left\langle
v_{0},v_{1},\cdot \cdot \cdot ,v_{k}\right\rangle =\phi_{L} (\widetilde{C})$%
, which means $\dim (\phi_{L} (\widetilde{C}))=k+1$. Assume that $\dim
(K(\phi_{L} (C)))=k$, and without loss of generality let $K(\phi_{L}
(C))=\left\langle v_{1},\cdot \cdot \cdot ,v_{k}\right\rangle $. If $v_{0}\in
\phi_{L} (\widetilde{C})\subseteq \phi_{L} (C)$, then we have $%
v_{0}+v_{i}\in \phi_{L} (C)$ for all $i=1,\cdot \cdot \cdot ,k$, since $%
v_{i}\in K(\phi_{L} (C))$. But $v_{0}+v_{i}\in \phi_{L} (\widetilde{C})$ for
all $i=1,\cdot \cdot \cdot ,k$, that means $v_{0}\in K(\phi_{L} (\widetilde{C%
}))\subseteq K(K(\phi_{L} (C)))\subseteq K(\phi_{L} (C))$, which is a
contradiction. Hence $m\neq \sum\limits_{i=0}^{s-1}\delta _{i}+\delta
_{s-2}-1$. Therefore we have the following%
\begin{equation*}
m\in \left\{ \sum\limits_{i=0}^{s-1}\delta
_{i},\sum\limits_{i=0}^{s-1}\delta _{i}+1,\sum\limits_{i=0}^{s-1}\delta
_{i}+2,\cdot \cdot \cdot ,\sum\limits_{i=0}^{s-1}\delta _{i}+\delta
_{s-2}-2,\sum\limits_{i=0}^{s-1}\delta _{i}+\delta _{s-2}\right\} \text{.}
\end{equation*}
\end{proof}

\section{Linearity and Duality of $\protect\phi_{L} (C)$}

Self-dual codes are important since many of the best codes known are of this
type. Numerous researchers have investigated their Gray images to find (not
necessarily linear) codes with optimal or extremal parameters. Most of the
best codes are nonlinear and they can be viewed as Gray images of linear
codes. On the other hand, linearity makes things easier. Therefore it is
also very important to know when the image $\phi_{L} (C)$ is
nonlinear/linear. Also some researchers looked into when the images of
self-dual codes are also self-dual. The aim of this section is to present
some knowledge about these two topics for codes over $%
\mathbb{Z}
_{p^{s}}$.

\begin{theorem}
Let $C$ be a linear code with the generating matrix of the form given in (%
\ref{genmatc}). If $\delta _{i}>0$ for $0\leq i\leq s-3$ then $\phi_{L} (C)$
is not linear.
\end{theorem}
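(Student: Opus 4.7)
The plan is to exhibit a codeword whose Gray image lies in $\phi_L(C)$ but not in $K(\phi_L(C))$, contradicting linearity. Recall that if a code $D\subseteq \mathbb{F}_p^N$ is linear, then every codeword $u\in D$ satisfies $u+D=D$, i.e. $D=K(D)$. So to prove $\phi_L(C)$ is nonlinear it suffices to produce a single $v\in C$ with $\phi_L(v)\notin K(\phi_L(C))$.

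First I would exploit the hypothesis to locate a codeword of sufficiently high order. Suppose $\delta_i>0$ for some $i$ with $0\leq i\leq s-3$. Then the generator matrix (\ref{genmatc}) contains a nonzero row of the form $p^i(e_j\,|\,A_{i,i+1}\text{-block}\,|\,\cdots)$ whose leading nonzero entry is $p^i$. Call this row $v$. Since the leading entry $p^i\in\mathbb{Z}_{p^s}$ has additive order $p^{s-i}$, we have
\[
\operatorname{ord}(v)=p^{s-i}\geq p^{s-(s-3)}=p^{3}>p^{2}.
\]

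Next I would apply Proposition \ref{psquareprop}: since $v\in C$ has order strictly greater than $p^2$, it follows that $\phi_L(v)\notin K(\phi_L(C))$. On the other hand $v\in C$ so $\phi_L(v)\in \phi_L(C)$. If $\phi_L(C)$ were a linear subspace of $\mathbb{F}_p^{np^{s-1}}$, then the identity $\phi_L(v)+\phi_L(C)=\phi_L(C)$ would hold trivially (it is just the statement that a subspace is closed under translation by its own elements), forcing $\phi_L(v)\in K(\phi_L(C))$. This contradicts the conclusion drawn from Proposition \ref{psquareprop}, so $\phi_L(C)$ cannot be linear.

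The only nontrivial step here is verifying that the hypothesis really does produce a codeword of order exceeding $p^2$, and this is immediate once one reads off the type $(p^s)^{\delta_0}(p^{s-1})^{\delta_1}\cdots(p)^{\delta_{s-1}}$ from the standard-form generator matrix: the rows coming from the $i$-th block have order $p^{s-i}$. Everything else is a direct appeal to Proposition \ref{psquareprop} and the tautological observation that linear codes equal their own kernels.
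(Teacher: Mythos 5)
Your proof is correct and follows essentially the same route as the paper: locate a row of the generator matrix of order $p^{s-i}\geq p^{3}>p^{2}$, invoke the kernel restriction to conclude its Gray image is not in $K(\phi_{L}(C))$, and use the fact that a linear code equals its own kernel. If anything, your version is the more careful one, since you cite Proposition \ref{psquareprop} (which is the statement actually needed for codewords of order exceeding $p^{2}$), whereas the paper's one-line proof appeals to Lemma \ref{lambdavlem} and only requires order greater than $p$, which does not by itself justify the conclusion.
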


\begin{proof}
We have elements $v\in C$ such that $ord(v)>p$, so by Lemma \ref{lambdavlem}
they are not in $K(\phi_{L} (C))$. Hence the image is not linear.
\end{proof}

\begin{theorem}
Let $C$ be a linear code. If $p>2$ then the image of a free code is not
linear.
\end{theorem}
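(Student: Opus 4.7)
The plan is to exhibit a codeword in $C$ whose doubled Gray image lies outside the image of $\phi_L$ altogether, thereby forcing $\phi_L(C)$ to fail closure under addition. Since $C$ is free over $\Z_{p^s}$, its standard-form generator matrix (\ref{genmatc}) has $\delta_0 \ge 1$ and $\delta_i = 0$ for $i \ge 1$, so in particular the unit vector $e_1 = (1,0,\dots,0) \in \Z_{p^s}^n$ is a codeword. Coordinate-wise evaluation of $\phi_L$ yields $\phi_L(e_1) = (1,0,0,\dots,0) \in \mathbb{F}_p^{np^{s-1}}$, a single $1$ in the leading position, and doubling in $\mathbb{F}_p$ gives $\phi_L(e_1) + \phi_L(e_1) = (2,0,0,\dots,0)$.

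The heart of the proof is then to check that $(2,0,0,\dots,0) \notin \phi_L(\Z_{p^s}^n)$. Since $\phi_L$ acts block-by-block on blocks of length $p^{s-1}$, this reduces to showing that $(2,0,0,\dots,0) \in \mathbb{F}_p^{p^{s-1}}$ is not $\phi_L(u)$ for any $u \in \Z_{p^s}$. Directly from the definition, if $u = qp^{s-1} + r$ with $0 \le q \le p-1$ and $0 \le r < p^{s-1}$, then $\phi_L(u) = (\overline{q+1}_r, \overline{q}_{p^{s-1}-r})$, so its entries take only the two consecutive values $q$ and $q+1 \pmod p$, with exactly one transition. For $(2,0,0,\dots,0)$ to fit this template we would need simultaneously $q+1 \equiv 2$ and $q \equiv 0 \pmod p$, forcing $p \mid 1$, which contradicts $p > 2$. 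This is the only nontrivial step, and also the place where the hypothesis $p>2$ is essential: when $p = 2$ one has $2 \equiv 0$ and the vector is just $\phi_L(0)$, so the argument genuinely needs the odd-prime assumption.

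Assembling the pieces, $\phi_L(e_1) + \phi_L(e_1)$ fails to lie in $\phi_L(\Z_{p^s}^n) \supseteq \phi_L(C)$ while $\phi_L(e_1) \in \phi_L(C)$, so $\phi_L(C)$ is not closed under addition and hence not linear. I should note that the naive route of comparing $\phi_L(e_1) + \phi_L(e_1) = (2,0,\dots,0)$ with $\phi_L(2e_1) = (1,1,0,\dots,0)$ only shows that $\phi_L$ is not an additive map on $\Z_{p^s}^n$; the stronger fact that the sum has no Gray preimage anywhere in $\phi_L(\Z_{p^s}^n)$ is what actually rules out linearity of the image code.
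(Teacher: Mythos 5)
Your overall strategy is sound and is essentially the paper's own argument: the paper takes the first row $v_1$ of $G=[\,I_{\delta_0}\ A\,]$ and observes that $-\phi_L(v_1)$ cannot lie in $\phi_L(C)$ because its first block $(p-1,0,\dots,0)$ is not $\phi_L(x)$ for any $x\in\Z_{p^s}$; you do the same with the multiple $2$ instead of $-1$, and your verification that $(2,0,\dots,0)$ does not fit the template $\overline{q+1}_r\,\overline{q}_{p^{s-1}-r}$ is correct (for $p=3$ the two arguments literally coincide). Your closing remark is also well taken: it is not enough to note $\phi_L(e)+\phi_L(e)\neq\phi_L(2e)$; one must show the sum has no preimage under $\phi_L$ at all, and both you and the paper do this.

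However, there is one genuine error: you assert that $e_1=(1,0,\dots,0)\in\Z_{p^s}^n$ is a codeword of the free code. It is not, in general. The standard-form generator matrix of a free code is $[\,I_{\delta_0}\ A\,]$, so its first row is $(1,0,\dots,0,a_{1,1},\dots,a_{1,n-\delta_0})$, and a codeword $\lambda G=(\lambda,\lambda A)$ equals $e_1$ only when the first row of $A$ vanishes. The repair is immediate and costs you nothing: replace $e_1$ by the actual first row $v_1$ of $G$. Since $\phi_L$ acts block-by-block, the first block of $\phi_L(v_1)+\phi_L(v_1)$ is still $\phi_L(1)+\phi_L(1)=(2,0,\dots,0)$, and a vector of $\mathbb{F}_p^{np^{s-1}}$ lies in $\phi_L(\Z_{p^s}^n)$ only if every one of its $n$ blocks individually lies in $\phi_L(\Z_{p^s})$; your computation shows the first block does not (for $p>2$ and $s\geq 2$), so $\phi_L(v_1)+\phi_L(v_1)\notin\phi_L(C)$ and the conclusion stands. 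With that substitution your proof is correct and matches the paper's.
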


\begin{proof}
If $C$ is a free code, then it has a generating matrix of the form%
\begin{equation*}
G=\left[
\begin{array}{cc}
I_{\delta _{0}} & A%
\end{array}%
\right] \text{,}
\end{equation*}%
where $A$ is an $\delta _{0}\times (n-\delta _{0})$ matrix over $%
\mathbb{Z}
_{p^{s}}$. Let $v_{i}$ be the $i^{th}$ row of $G$. Since every row of $G$ is
a codeword, if $\phi_{L} (C)$ is linear then $-\phi_{L} (v_{1})$ must be
included in $\phi_{L} (C)$. But
\begin{equation*}
-\phi_{L} (v_{1})=(-\phi_{L} (1),-\phi_{L} (v_{1,2}),\cdot \cdot \cdot
,-\phi_{L} (v_{1,n}))\notin \phi_{L} (C)\text{,}
\end{equation*}%
because $-\phi_{L} (1)\neq -\phi_{L} (x)$ for any $x\in
\mathbb{Z}
_{p^{s}}$.
\end{proof}

The image of a self-dual code $C$ over $%
\mathbb{Z}
_{p^{s}}$\ under the Gray map only has the cardinality of a self-dual code
if $p=2$ and $s=2$, since a self-dual code should include exactly half of
the ambient space, which means $\frac{sn}{2}=\frac{p^{s-1}n}{2}$. This
implies $s=p^{s-1}$ and hence $p=s=2$. So for $p>2$ we know that none of the
self-dual codes has self-dual image. However a code might have a self-dual
image if it is not self-dual. First we need to seek for self-orthogonal
images.

\begin{theorem}
Any code $C$ over $%
\mathbb{Z}
_{p^{s}}$\ of type $\left( p^{s-1}\right) ^{\delta _{1}}\left(
p^{s-2}\right) ^{\delta _{2}}\cdot \cdot \cdot \left( p^{2}\right) ^{\delta
_{s-2}}\left( p\right) ^{\delta _{s-1}}$ has an image that is a
self-orthogonal code.
\end{theorem}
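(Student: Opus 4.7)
The plan is to show $\phi_L(u)\cdot\phi_L(v)\equiv 0\pmod{p}$ for all $u,v\in C$ coordinate-by-coordinate, exploiting the fact that the hypothesis forces every coordinate of every codeword of $C$ to be divisible by $p$.

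First, because the type has $\delta_0=0$, the standard generator matrix (\ref{genmatc}) of $C$ starts with the row-block $pI_{\delta_1}$, and every subsequent block of rows is multiplied by a power $p^i$ with $i\ge 1$. Thus every generator row has all coordinates in $p\mathbb{Z}_{p^s}$, and $\mathbb{Z}_{p^s}$-linear combinations preserve this property; every $u\in C$ satisfies $u_j\in p\mathbb{Z}_{p^s}$ for each $j$.

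Second, I would record the explicit shape of $\phi_L(x)$ when $p\mid x$. Applying the division algorithm to $x/p$, write $x=qp^{s-1}+pr'$ with $0\le q\le p-1$ and $0\le r'<p^{s-2}$. The description of $\phi_L$ in Section 2 then gives
\begin{equation*}
\phi_L(x)=\bar{q}_{p^{s-1}}+\bar{1}_{pr'}\bar{0}_{p^{s-1}-pr'},
\end{equation*}
i.e., $\phi_L(x)$ has entry $q+1$ in its first $pr'$ coordinates and $q$ in the remaining $p^{s-1}-pr'$ coordinates.

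Third, for two such entries $x,z$ with parameters $(q_x,r'_x)$ and $(q_z,r'_z)$, I would assume without loss of generality that $r'_x\le r'_z$ and split the $p^{s-1}$ coordinates into the three ranges $[1,pr'_x]$, $[pr'_x+1,pr'_z]$, $[pr'_z+1,p^{s-1}]$. Computed first over $\mathbb{Z}$, the dot product equals
\begin{equation*}
pr'_x(q_x+1)(q_z+1)+p(r'_z-r'_x)q_x(q_z+1)+p(p^{s-2}-r'_z)q_xq_z,
\end{equation*}
which is manifestly a multiple of $p$ and therefore vanishes in $\mathbb{F}_p$. Summing coordinate-wise over $j=1,\dots,n$ gives $\phi_L(u)\cdot\phi_L(v)=0$ in $\mathbb{F}_p$, so $\phi_L(C)\subseteq\phi_L(C)^{\perp}$.

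I do not anticipate a real obstacle; the only points needing care are to perform the three-block identity over $\mathbb{Z}$ before reducing modulo $p$, and to note that the identity holds uniformly over all $q_x,q_z\in\{0,\dots,p-1\}$, including the edge case where $q+1\equiv 0\pmod{p}$.
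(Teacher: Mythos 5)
Your proposal is correct and follows essentially the same route as the paper: both arguments reduce to the observation that $\delta_0=0$ forces every coordinate of every codeword into $p\mathbb{Z}_{p^s}$, so each Gray image is a step vector whose blocks have length divisible by $p$, making every coordinatewise inner product a multiple of $p$. Your version is in fact slightly tidier, since you verify the orthogonality for arbitrary codewords (noting that divisibility by $p$ is preserved under $\mathbb{Z}_{p^s}$-linear combinations) and give the explicit three-block computation, whereas the paper only treats rows of the generator matrix and argues via repeated blocks in the product vector.
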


\begin{proof}
If $C$ is of type $\left( p^{s-1}\right) ^{\delta _{1}}\left( p^{s-2}\right)
^{\delta _{2}}\cdot \cdot \cdot \left( p^{2}\right) ^{\delta _{s-2}}\left(
p\right) ^{\delta _{s-1}}$, then it has a generating matrix of the form%
\begin{equation*}
G=\left[
\begin{array}{cccccc}
pI_{\delta _{1}} & pA_{1,2} & pA_{1,3} & \cdot \cdot \cdot & \cdot \cdot
\cdot & pA_{1,s} \\
0 & p^{2}I_{\delta _{2}} & p^{2}A_{2,3} & \cdot \cdot \cdot & \cdot \cdot
\cdot & p^{2}A_{2,s} \\
0 & 0 & \cdot \cdot \cdot & \cdot \cdot \cdot & \cdot \cdot \cdot & \cdot
\cdot \cdot \\
\cdot \cdot \cdot & \cdot \cdot \cdot & \cdot \cdot \cdot & \cdot \cdot \cdot
& \cdot \cdot \cdot & \cdot \cdot \cdot \\
0 & 0 & 0 & p^{s-2}I_{\delta _{s-2}} & p^{s-2}A_{s-2,s-1} & p^{s-2}A_{s-2,s}
\\
0 & 0 & 0 & 0 & p^{s-1}I_{\delta _{s-1}} & p^{s-1}A_{s-1,s}%
\end{array}%
\right] \text{.}
\end{equation*}%
Let $v=(v_{1},\cdot \cdot \cdot ,v_{n}),w=(w_{1},\cdot \cdot \cdot
,w_{n})\in C$ are rows of $G$ with order $p^{s-i_{1}}$ and $p^{s-i_{2}}$,
where $i_{1}\geq i_{2}\geq 1$. So each $v_{k}$ is in $\left\{
0,p^{i_{1}},2p^{i_{1}},\cdot \cdot \cdot ,p^{s}-p^{i_{1}}\right\} $ and each
$w_{k}$ is in $\left\{ 0,p^{i_{2}},2p^{i_{2}},\cdot \cdot \cdot
,p^{s}-p^{i_{2}}\right\} $, where $1\leq k\leq n$. For any element $m$ in $%
\mathbb{Z}
_{p^{s}}$ of order $p^{s-e}$ we have%
\begin{equation*}
\phi_{L} (m)=\overline{(q+1)}_{p^{e}t}\overline{(q)}_{(p^{s-1-e}-t)p^{e}}%
\text{,}
\end{equation*}%
where $m=p^{s-1}q+r$, $0\leq r=p^{e}t<p^{s-1}$, $0\leq q\leq p-1$. We will
consider $\left\langle \phi_{L} (v_{k}),\phi_{L} (w_{k})\right\rangle $
instead of $\left\langle \phi_{L} (v),\phi_{L} (w)\right\rangle $, since $%
\phi_{L} (v)=(\phi_{L} (v_{1}),\cdot \cdot \cdot ,\phi_{L} (v_{n}))$, $%
\phi_{L} (w)=(\phi_{L} (w_{1}),\cdot \cdot \cdot ,\phi_{L} (w_{n}))$, and
therefore $\left\langle \phi_{L} (v),\phi_{L} (w)\right\rangle
=\sum\limits_{i=1}^{n}\left\langle \phi_{L} (v_{i}),\phi_{L}
(w_{i})\right\rangle $. In both images the number of successively repeated
coordinates are divisible by a power of $p$ (at least by $p$). So in
coordinatewise product $\phi_{L} (v_{k})\cdot \phi_{L}
(w_{k})=(v_{k,1}w_{k,1},\cdot \cdot \cdot ,v_{k,p^{s-1}}w_{k,p^{s-1}})$ the
coordinates will be repeated at least $p$ times successively. So $\phi_{L}
(v_{k})\cdot \phi_{L} (w_{k})=(\overline{(a_{1})}_{p},\overline{(a_{2})}%
_{p},\cdot \cdot \cdot ,\overline{(a_{p^{s-2}})}_{p})$, where $a_{l}$ is the
$l^{th}$ repeating coordinate. Hence%
\begin{equation*}
\left\langle \phi_{L} (v_{k}),\phi_{L} (w_{k})\right\rangle
=\sum\limits_{i=1}^{p^{s-1}}\left( \phi_{L} (v_{k})\cdot \phi_{L}
(w_{k})\right) _{i}=\sum\limits_{j=1}^{p^{s-2}}pa_{j}=0\text{,}
\end{equation*}%
which means $\phi_{L} (C)\subseteq \left( \phi_{L} (C)\right) ^{\bot }$.
\end{proof}

\end{document}